\numberwithin{equation}{section}
\newtheorem{definition}{Definition}[section]
\newtheorem{lemma}[definition]{Lemma}
\newtheorem{theorem}[definition]{Theorem}
\newtheorem{proposition}[definition]{Proposition}
\newtheorem{corollary}[definition]{Corollary}
\newtheorem{remarkth}[definition]{Remark}
\newtheorem{example}[definition]{Example}
\newenvironment{remark}{\begin{remarkth}\upshape}{\hfill$\diamond$\end{remarkth}}
\renewcommand{\emph}[1]{{\bfseries\itshape{#1}}}
\newcommand{\R}{\mathbb{R}}      
\newcommand{\ltilde}[3][0]{\altura=0 \advance\altura by #1
           \ancho=#2 \anchom=\ancho \divide\anchom by 2
           \anchoa=\ancho \divide\anchoa by 4
           \anchob=\anchom \advance\anchob by \anchoa
           \kern-3pt \begin{array}[b]{c}
           \begin{picture}(1,1)(\anchom,-\altura)
        \qbezier(0,2)(\anchoa,5)(\anchom,2)
        \qbezier(\anchom,2)(\anchob,-1)(\ancho,4)
        \qbezier(0,2)(\anchoa,4.5)(\anchom,1.8)
        \qbezier(\anchom,1.8)(\anchob,-1.5)(\ancho,4)
       \end{picture} \\[-4pt]{#3}
                       \end{array} \kern-4pt    }
\newcommand{\lhat}[3][0]{\altura=0 \advance\altura by #1
           \ancho=#2 \anchom=\ancho \divide\anchom by 2
           \anchoa=\ancho \divide\anchoa by 4
           \anchob=\anchom \advance\anchob by \anchoa
           \kern-3pt \begin{array}[b]{c}
           \begin{picture}(1,1)(\anchom,-\altura)
        \qbezier(0,2)(\anchoa,4)(\anchom,6)
        \qbezier(\anchom,6)(\anchob,4)(\ancho,2)
        \qbezier(0,2)(\anchoa,3.8)(\anchom,5.6)
        \qbezier(\anchom,5.6)(\anchob,3.8)(\ancho,2)
       \end{picture} \\[-4pt] {#3}
                       \end{array} \kern-4pt    }
\newcommand\prol{\@ifstar{\@proldf}{\@prolpf}}  
\def\@prolpf{\@ifnextchar[{\@prolpf@wrt}{\@prolpf@}}
\def\@prolpf@wrt[#1]#2{\@ifnextchar[{\@prolpf@wrt@at{#1}{#2}}{\@prolpf@wrt@{#1}{#2}}}
\def\@prolpf@wrt@at#1#2[#3]{\prolsymbol^{#1}_{#3}#2}
\def\@prolpf@wrt@#1#2{\prolsymbol^{#1}#2}
\def\@prolpf@#1{\@ifnextchar[{\@prolpf@at{#1}}{\@prolpf@@{#1}}}
\def\@prolpf@at#1[#2]{\prolsymbol_{#2}#1}
\def\@prolpf@@#1{\prolsymbol#1}
\def\@proldf{\@ifnextchar[{\@proldf@wrt}{\@proldf@}}
\def\@proldf@wrt[#1]#2{\@ifnextchar[{\@proldf@wrt@at{#1}{#2}}{\@proldf@wrt@{#1}{#2}}}
\def\@proldf@wrt@at#1#2[#3]{\prolsymbol^{*#1}_{#3}#2}
\def\@proldf@wrt@#1#2{\prolsymbol^{*#1}#2}
\def\@proldf@#1{\@ifnextchar[{\@proldf@at{#1}}{\@proldf@@{#1}}}
\def\@proldf@at#1[#2]{\prolsymbol^*_{#2}#1}
\def\@proldf@@#1{\prolsymbol^*#1}
\def\prolsymbol{\mathcal{T}}
\begin{document}

\title[Invariant measures  for contact Hamiltonian systems]{Invariant measures  for contact Hamiltonian systems: symplectic sandwiches with contact bread }

\author[A. Bravetti]{A. Bravetti}
\address{A. Bravetti:
Instituto de Investigaciones en Matem\'aticas Aplicadas y en Sistemas (IIMAS) \\ 
Universidad Nacional Aut\'onoma de M\'exico \\
 A.P. 70-543, 04510 Ciudad de M\'exico, M\'exico} 
\email{alessandro.bravetti@iimas.unam.mx}

\author[M. de Le\'on]{M. de Le\'on} 
\address{M. de Le\'on:
Instituto de Ciencias Matem\'aticas (CSIC-UAM-UC3M-UCM) and Real Academia de Ciencias\\
C/Nicol\'as Cabrera 13-15, 28049 Madrid, Spain}
\email{mdeleon@icmat.es}

\author[J.\ C.\ Marrero]{J.\ C.\ Marrero}
\address{J.\ C.\  Marrero:
ULL-CSIC Geometr\'{\i}a Diferencial y Mec\'anica Geom\'etrica\\
Departamento de Mate\-m\'a\-ti\-cas, Estad{\'\i}stica e IO, Secci\'on de
Ma\-te\-m\'a\-ti\-cas, Universidad de La Laguna \\
La Laguna, Tenerife, Canary Islands, Spain} 
\email{jcmarrer@ull.edu.es}		

\author[E. Padr\'on]{E. Padr\'on}
\address{E. Padr\'on:
ULL-CSIC Geometr{\'\i}a Diferencial y Mec\'anica Geom\'etrica\\
Departamento de Matem\'aticas, Estad{\'\i}stica e I O \\
Secci\'on de Matem\'aticas, Facultad de Ciencias \\
Universidad de la Laguna, La Laguna, Tenerife, Canary Islands, Spain}
\email{mepadron@ull.edu.es} 
		
%
%
%

 \thanks{\noindent {\it 2020 Mathematics Subject Classification}. 37A05; 37J55; 53D10; 70G45; 70H05}

 \keywords {\noindent Contact manifold, contact Hamiltonian system, Reeb dynamics, exact symplectic manifold, Liouville vector field, invariant measure}


\begin{abstract}

We prove that, under some natural conditions, Hamiltonian systems on a contact manifold $C$ 
can be split into a Reeb dynamics on an open subset of  $C$ and a Liouville dynamics on a submanifold of  $C$ of 
codimension $1.$ For the Reeb dynamics we find an invariant measure. 
Moreover,  we show that, under certain completeness conditions, the existence of an invariant measure for the 
Liouville dynamics can be characterized using the notion of a symplectic sandwich with contact bread.
\end{abstract}

\maketitle

\setcounter{section}{0}

\section{Introduction}
\setcounter{equation}{0}
Contact Hamiltonian systems have been the subject of intense study over the last years.
This is in part because {contact structures} have found applications in many areas of science, ranging e.g.~from 
{classical and quantum mechanics~\cite{AoP2017,CiCrMa,CiDiIbMaSc1,DeLeon2017,Gaset1}
to field theories~\cite{CiDiIbMaSc2,Gaset2}}, from statistical mechanics~\cite{BrTaPRE} to statistics~\cite{Mike}, and from optimization~\cite{optimization2019} 
to celestial mechanics, cosmology {and relativity~\cite{BaIbLa,CELE2020,Lazo2017,DSloan1,DSloan2}.}

One question that is of particular interest in the study of such systems is the existence of invariant measures for the flow.
This is undoubtedly one of the most important questions about dynamical systems in general. In fact, for a dynamical system with $n$ degrees of freedom, 
an invariant measure and $(n-2)$ first integrals which are independent on a level set, we have that the solutions of the system in the level set may 
be found by quadratures (see, for instance,~\cite{Ko1988} and the references therein). So, for this reason, invariant measures for interesting special dynamical 
systems have been intensively discussed in literature
(see e.g.~\cite{Ma2010} for Poisson Hamiltonian systems, and~\cite{FeGaMa2015} and the references therein for non-holonomic mechanical systems).

{Recently, some methods have been proposed for the integration of  contact Hamiltonian systems (see for example \cite{GP}). 
Invariant measures for contact Hamiltonian systems are a good tool in order to solve this problem.} 
Moreover, the existence of these  measures is also interesting because a positive answer could be used, in conjunction
with the geometric discretization methods proposed in~\cite{CELE2020,Simo2020,Ver2019}, to develop new algorithms for statistical physics
(e.g.~Nos\'e--Hoover-like methods for molecular dynamics~\cite{BrTaPRE}) and for statistics (e.g.~for using
contact Hamiltonian dynamics in the context of Hamiltonian Monte Carlo methods~\cite{Mike2,Mike}),
and it could be also relevant for the question of the existence of attractors in inflationary cosmology~\cite{DSloan1,DSloan2}. 

Contrary to standard symplectic Hamiltonian systems, for which Liouville's theorem states the existence of a natural invariant volume,
the situation for contact flows is more involved, as we shall see shortly. A partial answer to the existence of an invariant measure for
contact flows has been given in~\cite{BrTa}, where the authors have found an analogue of Liouville's theorem for contact flows
(cf.~Theorem~2 in~\cite{BrTa}). 
However, such theorem provides an invariant measure that is well-defined only in the region where the Hamiltonian is different 
from zero, whereas the corresponding statement in~\cite{BrTa} concerning the (invariant) region where the Hamiltonian is 
identically zero 
(cf.~Theorem~1 therein) is incorrect. Therefore it remains open the question of exploring conditions for the existence of an invariant measure
in the region where the contact Hamiltonian vanishes identically.

Motivated by these arguments, in this work we start an  analysis 
of the existence of invariant measures for contact flows. To do this, one may proceed directly or one may consider the symplectification of the contact structure and, then, discuss the problem in the setting of homogeneous symplectic manifolds. In fact, contact structures and, more generally, non-coorientable contact structures may be considered as homogeneous symplectic structures. Here, the homogeneity is associated with a principal $\R^{\times}$-bundle structure (a nice and detailed discussion of this fact may be found in \cite{BrGrGr}). Anyway, in this paper, we opt for the first option and we deal with the problem directly in the contact manifold.   

In fact, we proceed in two steps: 

\begin{itemize}

\item In the first step, we provide a suitable description of the contact Hamiltonian dynamics. 
In particular, we prove that, in the open subset $U$, where the Hamiltonian function $H$ has no zeros,  
the Hamiltonian vector field $X_H$ is the Reeb vector field of a conformal change of the original contact form $\eta$ (Theorem \ref{casoU}). 

On the other hand, if the Reeb vector field $\xi$ associated with $\eta$ is transverse to the zero level set $S$ of $H$, 
then we see that the differential of the restriction of $\eta$ to $S$ induces an exact symplectic structure on $S$ and 
the restriction of $X_H$ to $S$ is, up to reparametrization, the Liouville vector field of the exact symplectic manifold $S$ (Theorem \ref{casoS}).

\item In the second step, our strategy in order to discuss the existence of an invariant measure for $X_H$ will be 
to study it separately on the regions $U$ and $S$. Since $X_{H|U}$ is the Reeb vector field of a conformal change of 
the contact structure $\eta$, we directly find an invariant measure for $X_{H|U}$ and we recover the corresponding result in \cite{BrTa} (see Corollary \ref{BrTa}). 

On the other hand, the restriction of $X_H$ to $S$ does not admit, in general, an invariant measure. 
For instance, if $X_{H|S}$ has a critical point then there cannot be any invariant measure for $X_{H|S}$ (Corollary \ref{CorObstruction}). 
However under the assumptions that $\xi(H)=\gamma\in \mathbb R-\{0\}$ and that both $\xi$ and $X_{H|S}$ are complete, 
we prove in Theorem~\ref{th:main}, 
as the main result of this paper, that $X_{H|S}$ admits an invariant measure if and only if the original contact manifold $C$ is a 
symplectic sandwich with contact bread (see Definition~\ref{def:symplecticsandwich})
and $X_{H|S}$ admits a suitable global rectification in the sandwich. 
\end{itemize}


The previous results are related to the theory of convex hypersurfaces of contact manifolds (see~\cite{ElGr,Gi}) 
and the theory of Weinstein and convex contact structures (see~\cite{CiEl,ElGr,Sa,We}). 
In fact, explicit relations are presented in Remarks~\ref{Eliashberg-Gromov-Sackel} and~\ref{Weinstein} and in the last section of the paper (Section~\ref{conclusions-future-work}).
Our analysis is also similar to the ``Sasaki-K\"ahler sandwich'' appearing in Sasakian geometry~\cite{boyerbook}
and has some common points with~\cite{Zotev2007,Zotev2019}, where the 
author discusses the case of manifolds equipped with closed degenerate 2-forms admitting contact points.

The paper is organized as follows. In Section \ref{section2}, we review some results and constructions 
on Hamiltonian dynamics in contact manifolds which will be used in the rest of the paper.
 In Section~\ref{section3}, we will prove that, under natural conditions, the Hamiltonian dynamics on a contact manifold $C$ 
 is a Reeb dynamics in the open subset $U$ of $C$ where the Hamiltonian function is different from zero, plus a 
 Liouville dynamics on the invariant submanifold $S$ in which the Hamiltonian is identically zero. 
In Section \ref{section4}, we will discuss the existence of invariant measures for the Reeb and Liouville dynamics in $U$ and $S$, 
respectively. We will assume that the Reeb vector filed $\xi$ of $C$ and the restriction $X_{H|S}$ to $S$ of the Hamiltonian vector field $X_H$ are complete. 
The paper ends with an appendix where we will present the analogous results for the more general case when $\xi$ and $X_{H|S}$ are not necessarily complete. 

\medskip 

\noindent {\bf Acknowledgements:} Lemma \ref{s1} and its proof were proposed to one of the authors (JCM) by Jes\'us Alvarez-L\'opez. 
This result is one of the key points in order to obtain the relation between the symplectic sandwich with contact bread 
and the existence of an invariant measure for the restriction of the Hamiltonian vector field to the zero level set of the Hamiltonian function. 
The authors are very grateful for the invaluable comments of Jes\'us Alvarez-L\'opez. 
The authors are also grateful to Kevin Sackel for his comments on the relationship between our results and the theory of convex contact structures 
and to Charles Boyer
and Lin Wang for their valuable comments.
JCM  and EP are supported by  Ministerio de Ciencia e Innovaci\'on (Spain) under grants PGC2018-098265-B-C32.
MdL has been supported by Ministerio de Ciencia e Innovaci\'on (Spain) under grants 
  MTM2016-76702-P and ``Severo Ochoa Programme for Centres of Excellence'' in R\&D (SEV-2015-0554).
This research was partially supported by the University of La Laguna, that sponsored the visit of AB.

\section{Hamiltonian dynamics on contact manifolds}\label{section2}
In this section, we will review some results and constructions on Hamiltonian dynamics in contact manifolds 
(for more details, 
see for instance \cite{boyercompletely,AoP2017,DeLeon2019,Go,LiMa,Val1}).
 
 Let $\eta$ be a $1$-form  on a $(2n+1)$-dimensional manifold $C$. Then $\eta$ is \emph{a contact form} if the $(2n+1)$-form
 \begin{equation}\label{volume}
 \eta\wedge (d\eta)^n
 \end{equation}
 defines a volume form on $C.$   The couple $(C,\eta)$ is said to be \emph{a contact manifold}.  
 In such a case, we can consider \emph{the Reeb vector field} $\xi\in {\mathfrak X}(C)$ associated with $(C,\eta)$ which is  characterized by the conditions 
 \begin{equation}\label{Reeb}
 \eta(\xi)=1\mbox{ and } \iota_\xi d\eta=0.
 \end{equation}
 
An alternative characterization of a contact structure is the following: for a $1$-form  $\eta$ on $C,$ we can consider  
{\bf the vector bundle 
morphism}  
 $\flat_\eta:TC\to T^*C$ from the tangent bundle to the cotangent  bundle  of $C$ given by
 \begin{equation}\label{flat}
\flat_\eta(v_x)=\iota_{v_x}((d\eta)(x)) +\eta(x)(v_x)\eta(x),\; \; \mbox{ for } v_x \in T_xC \mbox{  with }x\in C.
 \end{equation}
 Then, $\eta$ is a contact $1$-form on $C$  if and only if $\flat_\eta$ is a vector bundle isomorphism. Note that if $(C,\eta)$ 
 is a contact manifold then, using (\ref{Reeb})  and (\ref{flat}), we deduce that $\flat_\eta({\xi})=\eta.$ 
Moreover, associated with the contact $1$-form $\eta$, we have {\bf the $2$-vector} $\Lambda_\eta$ on $C$ given by  
$$\Lambda_\eta(\alpha,\beta)=d\eta(\flat_{\eta}^{-1}(\alpha),\flat_{\eta}^{-1}(\beta)) \mbox{ with } \alpha,\beta\in \Omega^1(C).$$
 
On the other hand, it is well-known~\cite{Go} that for every point $x$ on a contact manifold $(C,\eta),$ there are {\bf Darboux coordinates} $(z,q^1,\dots, q^n,p_1,\dots, p_n)$ 
on an open neighborhood of $x$ such that the local expression of the contact $1$-form $\eta$ is 
 $$\eta=dz-\sum_{i=1}^n p_idq^i.$$
 Therefore, the local expressions of the Reeb vector field $\xi$ and  the $2$-vector $\Lambda_\eta$ in such coordinates are
  $$\xi=\frac{\partial }{\partial z}\quad \mbox{ and } \quad 
  \Lambda_\eta=\sum_{i=1}^n \left(\frac{\partial }{\partial q^i}+  p_i\frac{\partial }{\partial z}\right)\wedge \frac{\partial }{\partial p_i}.$$
 
 \begin{example}[{\bf Contactification versus symplectification}]\label{Ex-2.1} {\rm Suppose that $(M,d\lambda)$ is an exact symplectic manifold, 
 then $dz + \lambda$  defines a contact form on $\R\times M$. In this case, the Reeb vector 
 is $\displaystyle\frac{\partial }{\partial z}$  and the contact manifold $(\R\times M, dz + \lambda)$ is called {\bf the contactification of $(M,d\lambda)$}.
 
 {On the other hand, if $\eta$ is a $1$-form on $C$ and $c\in \R-\{0\}$, then one may prove that the $2$-form $\Omega_c=\exp(cs)(d\eta + cds\wedge \eta)$ 
 is a symplectic structure on $\R\times C$  if and only if $\eta$ is a contact structure on $C.$ In the particular case when $c=-1$, 
 the symplectic manifold $(\R\times C,\Omega_1)$ is said to be {\bf the symplectification of the contact manifold $(C,\eta)$}~\cite{Iba1997}.
 }}
\end{example}

Using the $2$-vector $\Lambda_\eta$ and the Reeb vector field $\xi,$ we can introduce a bracket $\{\cdot,\cdot\}_\eta: C^\infty(C)\times C^\infty(C)\to C^\infty(C)$ 
  on the space  of differentiable functions on $C$ defined as 
$$\{f,g\}_\eta=\Lambda_\eta(df,dg)+f\xi(g) - g\xi(f),\;\;\;\forall f,g\in C^\infty(C).$$
Note that $\{\cdot,-1\}=\xi.$ This bracket is not in general a Poisson bracket because it is not a  derivation in each argument 
with respect to the standard product of real functions. In fact, $(C^\infty(C),\{\cdot, \cdot\}_\eta)$ is a Lie algebra and, for all $H\in C^\infty(C)$,  $\{\cdot,H\}_\eta$ defines a first order differential operator, i.e.
$$\{f f',H\}_\eta=f\{f',H\}_\eta + f'\{f,H\}_\eta -ff'\{1,H\}_\eta,\;\; \forall f,f'\in C^\infty(C).$$
Therefore  $\{\cdot,\cdot\}_\eta$ is a {\bf Jacobi bracket} (see \cite{GuLi,Ki,Li}).
Moreover, the operator $\{\cdot,H\}_\eta:C^\infty(C)\to C^\infty(C)$ can be identified with a couple $(X_H,f_H),$ where $X_H$ is a vector field and $f_H$ is a function on $C,$ which are related 
with $\{\cdot, H\}_{\eta}$ by the formula 
$$\{f,H\}_\eta=X_H(f) + f_Hf.$$
A direct computation proves that
\begin{equation}\label{XH}
X_H=-\iota_{dH}\Lambda_\eta - H\xi\mbox{ and }f_H=\xi(H).
\end{equation}

The vector field $X_H$ is called \emph{the Hamiltonian vector field associated with $H.$}
In fact, the Hamiltonian vector field $X_H$ is the unique vector field on $C$ characterized by 
\begin{equation}\label{cont}
\iota_{X_H}d\eta=dH-\xi(H)\eta \quad \mbox{ and } \quad \iota_{X_H}\eta=-H.
\end{equation} 
The integral curves of $X_H$ are the solutions of \emph{the contact Hamiltonian dynamical system $(C,\eta,H)$}.  
It is clear that the Reeb vector field  is the Hamiltonian vector field associated with the constant function $-1.$ 
   
Using Darboux local coordinates $(z,q^i,p_i)$ on $C,$ and Einstein's notation implying a sum over repeated indices,
the Hamiltonian vector field $X_H$ is just 
 $$X_H=\frac{\partial H}{\partial p_i}\frac{\partial }{\partial q^i}-\left(\frac{\partial H}{\partial q^i}+ p_i\frac{\partial H}{\partial z}\right)\frac{\partial }{\partial p_i} + \left(p_i\frac{\partial H}{\partial p_i}-H\right)\frac{\partial }{\partial z}$$
 and the contact Hamiltonian equations associated with $X_H,$ are given by 
 $$\frac{dq^i}{dt}=\frac{\partial H}{\partial p_i},\;\;\; \frac{dp_i}{dt}=-\left(\frac{\partial H}{\partial q^i}+ p_i\frac{\partial H}{\partial z}\right),\;\;  \frac{dz}{dt}=p_i\frac{\partial H}{\partial p_i}-H.$$
 
 The nature of contact Hamiltonian dynamics is different from the nature of symplectic Hamiltonian dynamics. 
 In fact, for a contact Hamiltonian system $(C, \eta, H)$ we have the following results: 
 \begin{itemize}
\item  $H$ is not a first integral of $X_H$. Indeed, using (\ref{XH}), we deduce
 \begin{equation}\label{HH}
 X_H(H)=-H\xi(H).
 \end{equation}
 \item From (\ref{cont}), it follows that 
 \begin{equation}\label{Lie-derivative-eta}
 {\mathcal L}_{X_H}\eta=\iota_{X_H}d\eta + d(\iota_{X_H}\eta) = -\xi(H)\eta.
 \end{equation} 
 Therefore, the contact form $\eta$ is not invariant under the action of $X_H.$
 \item The Liouville volume $\nu=\eta\wedge (d\eta)^n$ is not invariant under the action of $X_H$. In fact, using (\ref{Lie-derivative-eta}), we have that
\begin{equation}\label{noinv}
{\mathcal L}_{X_H}\nu=-\xi(H) \nu + n \eta \wedge d({\mathcal L}_X\eta) \wedge (d\eta)^{n-1} = -(n+1)\xi(H)\nu.
\end{equation}
 \end{itemize}

However, $\eta$ and $\nu$ are invariant under the action of $\xi=X_{-1},$ i.e.
$${\mathcal L}_{\xi}\eta=0 \mbox{ and } {\mathcal L}_{\xi}\nu=0.$$

Next, we will present some examples of interesting contact Hamiltonian systems.

 \begin{example}[{\bf Dissipative mechanical systems}]\label{ex:diss1}
 {\rm 
   We consider the product manifold $C=\R\times T^*\R^2$ endowed with the canonical contact structure 
   $\eta = dz - p_1dq^1-p_2dq^2,$ where $z$ and $({\bf q},{\bf p})=(q^1,q^2,p_1,p_2)$ are global coordinates on $\R$ and $T^*\R^2,$ respectively.   The contact Hamiltonian function $H: C \to \R$ is given by 
\[
H(z,{\bf q},{\bf p})=\frac{1}{2}(p_1^2+p_2^2) + V({\bf q})+\gamma z,\;\;\; \mbox{ with } \gamma\in \R-\{0\}. 
\]
So, we have that  the contact Hamiltonian equations are  
  $$\frac{dq^i}{dt}=p_i,\;\;\; \frac{dp_i}{dt}=-\frac{\partial V}{\partial q^i}-\gamma p_i, \quad \mbox{ with } \quad i=1,2 \quad\mbox{ and }\quad\frac{dz}{dt}=(p_1^2+p_2^2)-H$$
or, equivalently,  
 $$\ddot{q}^i + \gamma\dot{q}^i +  \frac{\partial V}{\partial q^i}=0\quad \mbox{ with }\quad i=1,2, \quad\mbox{ and }\quad \frac{dz}{dt}=\frac{1}{2}((\dot q^1)^2 + (\dot q^2)^2)-V({\bf q}) -\gamma z$$
 The reader may recognize the typical second order differential equation associated with 
 a dissipative mechanical system, where the dissipation is linear in the velocity 
 (see e.g.~\cite{AoP2017, DeLeon2017,Gaset1} for further details).}
 \end{example}

 \begin{example} [{\bf Reeb dynamics and standard Hamiltonian systems}] 
{\rm Let $Q$ be a smooth manifold and $\omega_Q = -d\lambda_Q$ the canonical symplectic structure on $T^*Q$, 
with $\lambda_Q$ the Liouville $1$-form on $T^*Q$. Suppose that $H: T^*Q \to \R$ is a Hamiltonian function and ${\mathcal H}_H \in \frak{X}(T^*Q)$ 
is the symplectic Hamiltonian vector field associated with $H$. This means that ${\mathcal H}_H$ is defined
by the following condition
\[
\iota_{{\mathcal H}_H}\omega_Q = dH.
\] 
 Denote by $\Delta$  the Liouville vector field on $T^*Q$, which is defined by 
 $$\iota_\Delta\omega_Q=-\lambda_Q.$$
 Let $c\in \R$ be a real number such that  $\Delta(H)\not=0$ on $H^{-1}(c).$ 
 Then $C=H^{-1}(c)$ is a  submanifold  of $T^*Q$ of codimension $1$  
 (note that the condition $\Delta(H)\not=0$  on $H^{-1}(c)$ implies that $H$ is regular at every point of $H^{-1}(c)$). 
 Moreover, using that $H$ is a first integral of ${\mathcal H}_H$, it follows that the restriction of ${\mathcal H}_H$ to $C$ is tangent to $C$. 
 In addition, since $\lambda_Q({\mathcal H}_H) = \Delta(H)$, we have that $\left(\lambda_Q({\mathcal H}_H)\right)_{|C} \neq 0$, at every point of $C$. 
 Thus, we may use Theorem 5.9 in \cite{Go} and deduce the following results: 
\begin{itemize} 
\item
The $1$-form $\eta=i_C^*\lambda_Q$ is a contact form on $C, $ where   $i_C:C\to T^*Q$ is the canonical inclusion. 
\item
The Reeb vector field of the contact manifold $(C, \eta)$ is the vector field $\xi$ on $C$ given by      
 $$\xi=\frac{({\mathcal H}_{H})_{|C}}{\Delta(H)_{|C}}.$$
%
\end{itemize}
So, in conclusion, the restriction to $C$ of the symplectic
Hamiltonian dynamics ${\mathcal H}_{H}$ is just a re-parametrization of the contact Hamiltonian vector field on $C$ corresponding to the constant function $-1$,
that is, the Reeb dynamics.}
 \end{example}
 
 \section{Contact Hamiltonian systems and Reeb-Liouville dynamics}\label{section3}
 \label{Section3}

Let $(C,\eta, H)$ be a contact Hamiltonian system on a manifold $C$ of dimension $2n+1$. 
Denote by $U$  the open subset of $C$
 $$U=\{x\in C/H(x)\not=0\}$$
and by $i_U:U\to C$ the canonical inclusion. Then, in this section, we will prove the following facts:
\begin{itemize}
\item
It is possible to define another contact form on $U$ 
so that the restriction of the Hamiltonian vector field $X_H$ to $U$ is just the Reeb vector field of the new contact form. 
\item
Under reasonable hypotheses, one has that the complementary subset $S = C-U$ of $U$ is a submanifold of codimension $1$ of $C$ which admits an exact symplectic structure.
\item
The restriction of $X_H$ to $S$ is tangent to $S$ and $X_{H|S}$ is a reparametrization of the Liouville vector field of the exact symplectic manifold $S$. 
\end{itemize} 
In conclusion, the contact Hamiltonian dynamics is confined to the two complementary regions $U$ and $S$, 
and it is a Reeb dynamics on $U$ and a Liouville dynamics on $S$.

Next, we will prove the first result.
\begin{theorem}\label{casoU} The $1$-form  
$$\eta_H=-\displaystyle\frac{1}{H\circ i_U}i_U^*\eta$$
 defines a contact structure on $U$. Moreover,  the Reeb vector field of $\eta_{U}$ 
 is just the restriction $X_{H|U}$ of the Hamiltonian vector field associated with $H$ to $U$. 
\end{theorem}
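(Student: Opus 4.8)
The plan is to treat the two assertions separately, exploiting that $-1/(H\circ i_U)$ is a smooth, nowhere-vanishing function on $U$, so that $\eta_H$ is merely a conformal rescaling of the restricted form $i_U^*\eta$. Throughout I would work on the open set $U$ and suppress the inclusion, writing $\eta$ and $H$ for $i_U^*\eta$ and $H\circ i_U$, and using $d(i_U^*\eta)=i_U^*d\eta$. Set $g=-1/H$, so that $\eta_H=g\,\eta$ with $g$ nowhere zero.

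For the contact condition, the key is the standard fact that a nonvanishing conformal factor changes the top form only by a nonzero multiple. Concretely I would compute $d\eta_H=dg\wedge\eta+g\,d\eta$ and expand $(d\eta_H)^n$ by the binomial theorem; every term carrying $(dg\wedge\eta)^k$ with $k\ge 2$ vanishes because it contains a factor $\eta\wedge\eta=0$, leaving
$$(d\eta_H)^n=g^n(d\eta)^n+n\,g^{n-1}\,dg\wedge\eta\wedge(d\eta)^{n-1}.$$
Wedging with $\eta_H=g\,\eta$ kills the second summand (again via $\eta\wedge\eta=0$), so that $\eta_H\wedge(d\eta_H)^n=g^{n+1}\,\eta\wedge(d\eta)^n$. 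Since $g^{n+1}\ne 0$ and $\eta\wedge(d\eta)^n$ is a volume form by \eqref{volume}, the left-hand side is a volume form and $\eta_H$ is contact.

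For the Reeb field I would invoke uniqueness: the Reeb field of $\eta_H$ is the only vector field satisfying the two conditions \eqref{Reeb} written for $\eta_H$, so it suffices to verify that $X_{H|U}$ meets them. The normalization $\eta_H(X_H)=g\,\eta(X_H)=-\tfrac1H(-H)=1$ is immediate from $\iota_{X_H}\eta=-H$ in \eqref{cont}. For the second condition I would write $d\eta_H=\tfrac{1}{H^2}\,dH\wedge\eta-\tfrac1H\,d\eta$ and contract with $X_H$, using $\iota_{X_H}(dH\wedge\eta)=X_H(H)\,\eta+H\,dH$ together with \eqref{cont} and the identity $X_H(H)=-H\xi(H)$ from \eqref{HH}; this turns the first summand into $\tfrac1H\big(dH-\xi(H)\eta\big)$, which cancels exactly against $\tfrac1H\,\iota_{X_H}d\eta=\tfrac1H\big(dH-\xi(H)\eta\big)$. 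Hence $\iota_{X_H}d\eta_H=0$, and by uniqueness the Reeb field of $\eta_H$ is $X_{H|U}$.

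The computations are elementary, so I do not expect a deep obstacle; the one point demanding care is the cancellation in the second Reeb condition. It is not automatic for an arbitrary conformal factor: it relies precisely on the choice $g=-1/H$ interacting with $X_H(H)=-H\xi(H)$, which is what produces the factor $H$ in $\iota_{X_H}(dH\wedge\eta)$ needed to reconcile $\tfrac{1}{H^2}$ with $\tfrac1H$. I would therefore present \eqref{cont} and \eqref{HH} as the two structural inputs driving the statement, and keep careful track of signs in the interior-product expansion of $dH\wedge\eta$.
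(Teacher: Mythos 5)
Your proposal is correct and follows essentially the same route as the paper: the same decomposition $d\eta_H=\tfrac{1}{H^2}\,dH\wedge\eta-\tfrac1H\,d\eta$, the same conformal-volume identity $\eta_H\wedge(d\eta_H)^n=\tfrac{(-1)^{n+1}}{H^{n+1}}\,\eta\wedge(d\eta)^n$, and the same verification of the two Reeb conditions via \eqref{cont} and \eqref{HH}. The only difference is presentational: you spell out the binomial expansion and the interior-product cancellation that the paper's proof records more tersely.
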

\begin{proof} A direct computation proves that 
\begin{equation}\label{d-eta-H}
d\eta_H=\frac{1}{(H\circ i_U)^2}d(H\circ i_U)\wedge i_U^*\eta - \frac{1}{H\circ i_U} i_U^*d\eta.
\end{equation}
Therefore, 
\begin{equation}\label{Conformal-change-Liouville-volume}
\eta_H\wedge (d\eta_H)^n = \frac{(-1)^{n+1}}{(H \circ i_U)^{n+1}} i_U^*(\eta \wedge (d\eta)^n),
\end{equation}
is a volume form on $U$ and hence 
$\eta_H$ is a contact $1$-form on $U$. Furthermore, from (\ref{cont}), it follows that  
$$\eta_H(X_{H|U})=- \frac{1}{H\circ i_U}(\eta(X_H)\circ i_U)=1.$$
On the other hand, if $\xi$ is the Reeb vector field of $C$ then, using (\ref{cont}), (\ref{HH}) and (\ref{d-eta-H}), we deduce that
\[
\iota_{X_{H|U}}d\eta_H = -\frac{1}{H \circ i_U} (\xi(H) \circ i_U) i_U^*\eta + \frac{1}{H \circ i_U}d(H \circ i_U) - \frac{1}{H \circ i_U}(d(H \circ i_U) - (\xi(H) \circ i_U) i_U^*\eta) = 0.
\]
This proves the result.
\end{proof}

Now, a natural question arises: 
	what happens in the subset $S=C-U=H^{-1}(0)$? 
In order to give an answer to this question, we start with a preliminary result, which 
is a consequence from (\ref{HH}).
\begin{lemma}\label{0-regular-value}
If $S\not=\emptyset$ and $0$ is a regular value of $H$, then $S$ is a submanifold of $C$ of codimension~1 
which is invariant under the action of $X_H$, that is, $X_{H|S}$ is tangent to $S.$ 
\end{lemma}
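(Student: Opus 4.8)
The plan is to establish the two assertions separately, both of which follow quickly from the regularity hypothesis together with the identity (\ref{HH}). First I would invoke the regular value theorem: since $0$ is a regular value of $H$ and $S=H^{-1}(0)$ is nonempty, the preimage $S$ is an embedded submanifold of $C$ whose codimension equals the dimension of the target $\R$, namely $1$. Moreover, at each point $x\in S$ the tangent space is cut out by the differential of the defining function, that is, $T_xS=\ker(dH_x)$; this is precisely the content of regularity, and it is the identification on which the tangency argument rests.

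With this description of $T_xS$ in hand, the second step is to observe that a vector field $Y$ defined along $S$ is tangent to $S$ if and only if $dH(Y)=Y(H)$ vanishes at every point of $S$. Applying this criterion to $Y=X_H$, tangency of $X_H$ to $S$ is therefore equivalent to the condition $X_H(H)=0$ on $S$.

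Finally I would appeal directly to (\ref{HH}), which gives $X_H(H)=-H\,\xi(H)$. Since $H$ vanishes identically on $S=H^{-1}(0)$, the right-hand side vanishes at every point of $S$, so $X_H(H)_{|S}=0$. By the criterion of the previous step, $X_H$ is tangent to $S$ at each of its points; hence $S$ is invariant under the flow of $X_H$, and the restriction $X_{H|S}$ is a well-defined vector field on $S$.

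The argument has essentially no hard part: the only point requiring a little care is the passage from the abstract notion of invariance to the pointwise condition $X_H(H)=0$, which is legitimate precisely because regularity guarantees $T_xS=\ker(dH_x)$. Once (\ref{HH}) is available, the vanishing of $X_H(H)$ on the zero level set is immediate, and no further computation is needed.
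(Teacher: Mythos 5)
Your proof is correct and follows exactly the route the paper intends: the paper states Lemma \ref{0-regular-value} as an immediate consequence of the identity (\ref{HH}), and your argument simply spells out the details (regular value theorem gives the codimension-$1$ submanifold with $T_xS=\ker(dH_x)$, and $X_H(H)=-H\,\xi(H)$ vanishes on $S=H^{-1}(0)$, giving tangency). Nothing is missing; this is the same argument, made explicit.
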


After Lemma~\ref{0-regular-value}, two natural questions arise: 
\begin{enumerate}
\item What is the induced geometric structure on $S$?
\item What is the description of $X_{H|S}$ in terms of the previous structure?
\end{enumerate}
Related to these two questions, we may prove the following result.

\begin{theorem}\label{casoS}
Let $H: C \to \R$ be a Hamiltonian function on a contact manifold $(C, \eta)$ with Reeb vector field $\xi$ 
and assume that $S = H^{-1}(0) \not=\emptyset$ and that $\xi(H)(x)\not=0,$ for all $x\in S.$ 
\begin{enumerate}
\item If $\theta = i_S^*\eta$ on $S$, with $i_S:S\to C$ the inclusion map, then $\Omega=d\theta$ is an exact symplectic structure on $S$. 
\item If $\Delta$ is the Liouville vector field of the exact symplectic manifold $(S,\Omega)$, i.e.~$\Delta$ is the vector field characterized by 
\begin{equation}\label{Liouville}
\iota_\Delta\Omega=\theta,
\end{equation}
then $X_{H|S}$ is the reparametrization of $\Delta$ given by 
\begin{equation}\label{X-H-S-Delta}
X_{H|S}=-(\xi(H)\circ i_S)\Delta.
\end{equation}
\end{enumerate}
\end{theorem}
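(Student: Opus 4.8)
The plan is to establish the two claims in sequence, using the characterization of $X_H$ from \eqref{cont} together with the hypotheses $H|_S = 0$ and $\xi(H)|_S \neq 0$.

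For part (1), I would first pull back the two identities in \eqref{cont} to $S$ via $i_S$. Since $H \circ i_S = 0$, the second equation $\iota_{X_H}\eta = -H$ gives $i_S^*(\iota_{X_H}\eta) = 0$, so $\theta(X_{H|S}) = 0$, i.e.\ $X_{H|S} \in \ker\theta$. Pulling back the first equation and using $i_S^*(dH) = d(H\circ i_S) = 0$ yields $\iota_{X_{H|S}}\Omega = -(\xi(H)\circ i_S)\,\theta$, where $\Omega = d\theta$. The crux of showing $\Omega$ is symplectic is nondegeneracy: $S$ has dimension $2n$, so I must show $\Omega^n \neq 0$ everywhere on $S$. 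The natural approach is to relate $\theta \wedge \Omega^n$ on $C$ (restricted to $S$) to the ambient volume form $\eta \wedge (d\eta)^n$. Concretely, since $\xi$ is transverse to $S$ (because $\xi(H)|_S \neq 0$ forces $\xi \notin T_xS$), I can use the splitting $T_xC = \R\xi \oplus T_xS$; contracting $\eta \wedge (d\eta)^n$ with $\xi$ and using $\eta(\xi)=1$, $\iota_\xi d\eta = 0$ shows $\iota_\xi(\eta \wedge (d\eta)^n) = (d\eta)^n$, whose pullback to $S$ is $\Omega^n$. Since the left side is a nonvanishing $2n$-form restricted to a transverse hypersurface, $\Omega^n \neq 0$ on $S$, so $\Omega$ is symplectic; exactness is immediate as $\Omega = d\theta$.

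For part (2), the Liouville vector field $\Delta$ is defined by $\iota_\Delta\Omega = \theta$. Comparing this with the pulled-back first equation $\iota_{X_{H|S}}\Omega = -(\xi(H)\circ i_S)\,\theta$ derived above, and using that $\Omega$ is nondegenerate (so $\iota_{(\cdot)}\Omega$ is injective on vector fields), I conclude $X_{H|S} = -(\xi(H)\circ i_S)\,\Delta$, which is exactly \eqref{X-H-S-Delta}. The tangency of $X_H$ to $S$ needed to make sense of $X_{H|S}$ is already supplied by Lemma~\ref{0-regular-value}, since $\xi(H)|_S \neq 0$ guarantees $0$ is a regular value of $H$.

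The main obstacle is the nondegeneracy argument in part (1): one must verify carefully that the transversality of $\xi$ to $S$ lets one identify the restriction of $(d\eta)^n$ to $S$ with $\Omega^n$ and that this form is genuinely nonzero. The key computational identity is $\iota_\xi(\eta \wedge (d\eta)^n) = (d\eta)^n$, and the geometric content is that pulling back a top-degree-minus-one piece to a hypersurface transverse to the contracting direction preserves nonvanishing. Once this is in place, everything else follows formally from \eqref{cont} and the defining relation for $\Delta$.
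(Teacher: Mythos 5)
Your proof is correct. Part (2) coincides with the paper's argument: both pull back \eqref{cont} along $i_S$, use $i_S^*(dH)=0$ and the tangency of $X_H$ to $S$ (Lemma \ref{0-regular-value}) to get $\iota_{X_{H|S}}\Omega = -(\xi(H)\circ i_S)\,\theta$, and conclude from nondegeneracy together with the defining relation $\iota_\Delta\Omega=\theta$.

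Where you genuinely diverge from the paper is the nondegeneracy step in part (1). The paper argues pointwise on kernels: if $v\in T_xS$ satisfies $\iota_v\Omega(x)=0$, then since also $(\iota_v d\eta(x))(\xi(x))=0$ (by $\iota_\xi d\eta=0$) and $T_xC=T_xS\oplus\langle\xi(x)\rangle$, the covector $\iota_v d\eta(x)$ vanishes on all of $T_xC$; the contact condition gives $\ker d\eta(x)=\langle\xi(x)\rangle$, so $v\in\langle\xi(x)\rangle\cap T_xS=\{0\}$. You instead show the top power is a volume form: $\iota_\xi\bigl(\eta\wedge(d\eta)^n\bigr)=(d\eta)^n$, whose pullback to $S$ is $\Omega^n$, and for any basis $(e_1,\dots,e_{2n})$ of $T_xS$ one has $\Omega^n(x)(e_1,\dots,e_{2n})=\bigl(\eta\wedge(d\eta)^n\bigr)(x)(\xi(x),e_1,\dots,e_{2n})\neq 0$ because transversality makes $(\xi(x),e_1,\dots,e_{2n})$ a basis of $T_xC$. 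Both routes use exactly the same two ingredients (the structure equations \eqref{Reeb} and the splitting $T_xC=T_xS\oplus\langle\xi(x)\rangle$), and both are equally rigorous. Yours has the small bonus of producing the explicit volume form $\Omega^n=i_S^*\bigl(\iota_\xi(\eta\wedge(d\eta)^n)\bigr)$ on $S$, which is precisely the type of form used later in Theorem \ref{inv-measure-S}; the paper's kernel argument avoids any determinant-type reasoning and leans directly on the standard fact that $\ker d\eta$ is spanned by $\xi$. One phrasing slip on your side, not affecting correctness: the sentence proposing to relate ``$\theta\wedge\Omega^n$ on $C$ (restricted to $S$)'' to the ambient volume is vacuous, since $\theta\wedge\Omega^n$ is a $(2n+1)$-form on the $2n$-dimensional manifold $S$ and hence identically zero; the concrete computation you then carry out (contracting the ambient volume with $\xi$ and pulling back) is the right one and does not depend on that sentence.
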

\begin{proof} The condition $\xi(H)(x)\not=0$ for all $x\in S$ implies that $0$ is a regular value of $H$ and, using Lemma~\ref{0-regular-value}, 
we deduce that $S$ is a submanifold of $C$ of codimension $1$ and $X_{H|S}$ is a vector field on $S$.

(i) We will see that $\Omega$ is a non-degenerate $2$-form on $S$. In fact, if $x\in S$ and $v \in T_xS$ satisfies
\[
\iota_v\Omega(x) = 0,
\]
we have that $(\iota_v d\eta(x))(u) = 0$ for all $u \in T_xS$. Now, since $\xi$ is transverse to the submanifold $S$, it follows that
\begin{equation}\label{Reeb-transverse}
T_xC = T_xS \oplus \langle \xi(x) \rangle.
\end{equation}
However, $\iota_v d\eta(x) = 0$ implies $v \in \langle \xi(x)\rangle$, and hence $v \in \langle \xi(x) \rangle \cap T_xS$ which, by (\ref{Reeb-transverse}), means that $v = 0$.

(ii) Using (\ref{cont}) and the fact that $i_S^*(dH) = 0$, we obtain that
\[
\iota_{X_{H|S}} \Omega =i_S^*(dH - \xi(H)\eta)=-(\xi(H)\circ i_S) \theta=\iota_{-(\xi(H)\circ i_S )\Delta}\Omega.
\]
Therefore, since $\Omega$ is a non-degenerate $2$-form, we conclude that $X_{H|S} = -(\xi(H)\circ i_S )\Delta$.
\end{proof}
\begin{remark}
{\rm Note that if $\xi(H)\equiv 0$ then the behavior of $X_H$ is similar to that of a Hamiltonian vector field in an exact 
presymplectic manifold $(C^{2n+1},d\eta)$  of dimension $2n+1$ (of rank $2n$). In fact, from (\ref{HH}), (\ref{Lie-derivative-eta}) and (\ref{noinv}), we deduce that:
\begin{itemize}
\item
$H$ is a first integral of $X_H$ and
\item
The contact structure $\eta$ and the Liouville volume $\nu = \eta \wedge (d\eta)^n$ are $X_H$-invariant.
\end{itemize}
}
\end{remark}
\begin{remark}\label{Eliashberg-Gromov-Sackel}
{\rm Under the same hypotheses as in Theorem \ref{casoS}, we have that:
\begin{itemize}
\item
$S = H^{-1}(0)$ is a convex hypersurface of the contact manifold $(C, \eta)$ (see \cite{ElGr,Gi} for the theory of convex hypersurfaces).
\item
If, in addition, we assume that $\xi(H)$ is a positive function on $C$, 
then it is easy to prove that there exists a Riemannian metric $g$ on $C$ such that $\xi(H) = \| \xi \|^2$ and 
$dH$ is the dual metric of $\xi$ with respect to $g$. So, if we take $0 < \delta \leq \displaystyle \frac{1}{2}$, it follows that
\[
\xi(H) \geq \delta ( \| \xi \|^2 + \| dH \|^2).
\]
This implies that $(\xi, H)$ is a gradient-like pair and that $(C, \eta, \xi, H)$ is a convex contact manifold (without singular points) in the terminology by Sackel (see \cite{Sa}). 
\end{itemize}
}
\end{remark}
\section{Invariant measures for contact Hamiltonian dynamics}
\label{section4}
Let $(C, \eta, H)$ be a Hamiltonian system on a connected contact manifold $C$ of dimension $2n+1$. 
Then, $C$ is orientable and a (positive) volume form on $C$ is just the Liouville volume $\nu = \eta \wedge (d\eta)^n$. On the other hand, any positive volume form on $C$ is given by
\[
\nu_{\sigma} = \exp(\sigma) \nu = \exp(\sigma)\eta \wedge (d\eta)^n,
\]
with $\sigma: C \to \R \in C^{\infty}(C)$.
\begin{proposition}\label{propcom} 
The volume form 
$\nu_\sigma $ is an invariat measure for the Hamiltonian vector field $X_H$ if and only if
\begin{equation}\label{com}
X_H(\sigma)=(n+1)\xi(H),
\end{equation}
where $\xi$ is the Reeb vector field associated with $(C,\eta).$ 
\end{proposition}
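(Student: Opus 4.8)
The plan is to characterize invariance of the measure $\nu_\sigma$ by the vanishing of its Lie derivative along $X_H$, and then to reduce everything to the formula (\ref{noinv}) already established above. First I would recall that a volume form is invariant under the flow of a vector field precisely when its Lie derivative along that field vanishes identically; thus the statement to be proved is equivalent to the single identity $\mathcal{L}_{X_H}\nu_\sigma = 0$.

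Next I would expand $\mathcal{L}_{X_H}\nu_\sigma = \mathcal{L}_{X_H}(\exp(\sigma)\nu)$ using the Leibniz rule for the Lie derivative acting on the product of a function and a form. This yields $\mathcal{L}_{X_H}(\exp(\sigma)\nu) = X_H(\exp(\sigma))\,\nu + \exp(\sigma)\,\mathcal{L}_{X_H}\nu$. Since $X_H(\exp(\sigma)) = \exp(\sigma)\,X_H(\sigma)$, and since (\ref{noinv}) gives $\mathcal{L}_{X_H}\nu = -(n+1)\xi(H)\nu$, I can collect the common nowhere-vanishing factor $\exp(\sigma)\nu$ to obtain
$$\mathcal{L}_{X_H}\nu_\sigma = \exp(\sigma)\bigl(X_H(\sigma) - (n+1)\xi(H)\bigr)\nu.$$

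Finally, because $\exp(\sigma)$ is strictly positive and $\nu = \eta \wedge (d\eta)^n$ is a volume form, hence nowhere zero, the right-hand side vanishes identically if and only if the scalar factor $X_H(\sigma) - (n+1)\xi(H)$ vanishes identically, which is exactly condition (\ref{com}). There is no genuine obstacle in this argument: the only real input is the prior computation (\ref{noinv}) of $\mathcal{L}_{X_H}\nu$, and the sole point requiring care is that the coefficient $(n+1)$ and its sign produced by (\ref{noinv}) match those appearing in (\ref{com}).
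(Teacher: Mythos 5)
Your proposal is correct and follows essentially the same route as the paper: both characterize invariance by $\mathcal{L}_{X_H}\nu_\sigma=0$, expand via the Leibniz rule to get $\mathcal{L}_{X_H}\nu_\sigma=\left(X_H(\sigma)-(n+1)\xi(H)\right)\nu_\sigma$ using (\ref{noinv}), and conclude from the fact that $\nu_\sigma$ is nowhere vanishing.
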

\begin{proof} The volume form $\nu_{\sigma}$ is an invariant measure for $X_H$ if and only if ${\mathcal L}_{X_H}\nu_{\sigma} = 0$.

Now, using (\ref{noinv}), we deduce that 
$${\mathcal L}_{X_H}\nu_\sigma=X_H(\sigma)\nu_\sigma + \exp(\sigma) {\mathcal L}_{X_{H}}\nu = \left(X_H(\sigma) - (n+1)\xi(H)\right)\nu_\sigma,$$
which implies the result.
\end{proof}

If $(z, q^{i}, p_i)$ are Darboux coordinates for the contact manifold $(C,\eta),$ we have that the local expression of Equation~\eqref{com} is 
$$\frac{\partial H}{\partial p_i}\frac{\partial \sigma}{\partial q^i}-
\left(\frac{\partial H}{\partial q^{i}}+ p_i\frac{\partial H}{\partial z}\right)\frac{\partial \sigma}{\partial p_i}+
\left(p_i\frac{\partial H}{\partial p_i}-H\right)\frac{\partial \sigma}{\partial z}=(n+1)\frac{\partial H}{\partial z}$$ for which finding an explicit form for the solutions seems to be rather prohibitive.

So, our strategy in order to discuss the existence of invariant measures for the contact Hamiltonian dynamics will be to study it separately on the regions 
\[
U = \{x \in C / H(x) \not= 0\} \; \; \mbox{ and } \; \; S = C - U.
\]

Since the restriction 
 $X_{H|U}$ 
 is just the Reeb vector field of the contact form $\eta_H=-\displaystyle\frac{1}{H\circ i_U}i_U^*\eta$ (see Theorem~\ref{casoU}), 
 then  the Liouville volume induced by $\eta_H$  is invariant under the action of  $X_{H|U}$, that is
 $${\mathcal L}_{X_{H|U}}(\eta_H\wedge (d\eta_H)^n)=0.$$
 
Therefore, using (\ref{Conformal-change-Liouville-volume}), we deduce the following result which was proved in \cite{BrTa}. 
%
\begin{corollary}
\label{BrTa}
The volume form $\displaystyle\frac{1}{H^{n+1}}\eta\wedge (d\eta)^{n}$ on $U$ is an invariant measure for $X_{H|U}$,
which corresponds to the function $\sigma =-(n+1)\ln(H)$ in Proposition~\ref{propcom}.
\end{corollary}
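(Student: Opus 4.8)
The plan is to exploit Theorem~\ref{casoU} so as to reduce the claim to the already-established invariance of the Liouville volume of a contact form under its own Reeb field. First I would recall that, as observed in Section~\ref{section2}, the Reeb vector field of any contact manifold preserves both the contact form and its associated Liouville volume. Applying this to the contact manifold $(U,\eta_H)$, whose Reeb field is $X_{H|U}$ by Theorem~\ref{casoU}, yields at once
\[
\mathcal{L}_{X_{H|U}}\bigl(\eta_H\wedge(d\eta_H)^n\bigr)=0.
\]
Next I would invoke the conformal identity~\eqref{Conformal-change-Liouville-volume}, which expresses $\eta_H\wedge(d\eta_H)^n$ as $(-1)^{n+1}(H\circ i_U)^{-(n+1)}\,i_U^*(\eta\wedge(d\eta)^n)$. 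Since invariance is insensitive to multiplication by a nonzero constant, the displayed identity immediately gives that $\frac{1}{H^{n+1}}\eta\wedge(d\eta)^n$ is $X_{H|U}$-invariant, which is the assertion.

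To match the statement to Proposition~\ref{propcom}, I would check directly that $\sigma=-(n+1)\ln H$ solves the characteristic equation~\eqref{com}. Using~\eqref{HH} one computes $X_H(\sigma)=-(n+1)X_H(H)/H=-(n+1)(-H\xi(H))/H=(n+1)\xi(H)$, which is exactly~\eqref{com}; moreover $\exp(\sigma)=H^{-(n+1)}$, so the invariant volume $\nu_\sigma$ furnished by Proposition~\ref{propcom} (applied to the contact Hamiltonian system on the open submanifold $U$) coincides with $\frac{1}{H^{n+1}}\eta\wedge(d\eta)^n$. This provides a second, self-contained derivation that does not pass through the conformal change, and it also explains the stated correspondence $\sigma=-(n+1)\ln H$.

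There is essentially no hard step here, since the substantive work was already carried out in Theorem~\ref{casoU} and Proposition~\ref{propcom}. The only point deserving care is the sign and the domain of the logarithm: the factor $(-1)^{n+1}$ appearing in~\eqref{Conformal-change-Liouville-volume} is irrelevant precisely because invariance is insensitive to constant rescalings, and on the connected components of $U$ where $H<0$ one should read $\ln\lvert H\rvert$ in the expression for $\sigma$ (equivalently, simply work with the manifestly well-defined density $H^{-(n+1)}\nu$), so that no branch or orientation ambiguity arises.
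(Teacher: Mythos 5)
Your proof is correct and follows essentially the same route as the paper: invoke Theorem~\ref{casoU} to identify $X_{H|U}$ as the Reeb vector field of $\eta_H$, use the invariance of the Liouville volume of a contact form under its own Reeb field, and then transfer this through the conformal identity~\eqref{Conformal-change-Liouville-volume}, discarding the constant factor $(-1)^{n+1}$. Your additional direct verification that $\sigma=-(n+1)\ln H$ solves~\eqref{com} via~\eqref{HH}, and your remark about reading $\ln\lvert H\rvert$ on components where $H<0$, are sound supplements but do not change the argument.
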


Now,  in order to discuss the existence of an invariant measure on $S = H^{-1}(0) = C - U$, we will assume the hypothesis   
$$\xi(H)(x)\not=0\quad \mbox{ for all } x\in S.$$
Then, using Theorem~\ref{casoS}, we have that the restriction of $X_H$ to the closed submanifold $S$ is tangent to $S$ and 
 $$X_{H|S}=-(\xi(H)\circ i_S)\Delta, $$
 where  $i_S:S\to C$ is the inclusion map and $\Delta$ denotes  the Liouville vector field of the exact symplectic manifold $(S,di_S^*\eta)$. 
 
 Moreover, we may prove that the existence of an invariant measure for the contact Hamiltonian dynamics on $S$ is equivalent to the existence of an invariant measure for the Liouville vector field.
 
 \begin{lemma}\label{XD} $\Delta$ preserves a volume form $\nu_S$ on $S$ if and only if $X_{H|S}$ preserves the volume form $\displaystyle \frac{1}{\xi(H)\circ i_S}\nu_S$. 
 \end{lemma}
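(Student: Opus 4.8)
The plan is to exploit the reparametrization formula $X_{H|S}=-(\xi(H)\circ i_S)\Delta$ from Theorem~\ref{casoS} together with Cartan's magic formula, using the fact that a top-degree form on $S$ is automatically closed. Write $g=\xi(H)\circ i_S$, which by the standing hypothesis is a nowhere-vanishing function on $S$; then $X_{H|S}=-g\Delta$, the factor $1/g$ is well defined, and $\tfrac{1}{g}\nu_S$ is indeed a volume form on $S$.

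The key observation is that the scaling by $-g$ in the vector field and the scaling by $1/g$ in the volume form cancel inside the interior product. Since $\iota$ is $C^\infty(S)$-linear in its vector-field argument,
$$
\iota_{X_{H|S}}\Bigl(\tfrac{1}{g}\,\nu_S\Bigr)
=-g\,\iota_{\Delta}\Bigl(\tfrac{1}{g}\,\nu_S\Bigr)
=-g\cdot\tfrac{1}{g}\,\iota_{\Delta}\nu_S
=-\iota_{\Delta}\nu_S.
$$
Applying Cartan's formula and using that both $\tfrac{1}{g}\nu_S$ and $\nu_S$ are top-degree (hence closed) forms on $S$, I then get
$$
\mathcal{L}_{X_{H|S}}\Bigl(\tfrac{1}{g}\,\nu_S\Bigr)
=d\,\iota_{X_{H|S}}\Bigl(\tfrac{1}{g}\,\nu_S\Bigr)
=-d\,\iota_{\Delta}\nu_S
=-\mathcal{L}_{\Delta}\nu_S .
$$

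Since $g$ is nowhere zero, this single identity $\mathcal{L}_{X_{H|S}}(\tfrac{1}{g}\nu_S)=-\mathcal{L}_{\Delta}\nu_S$ shows that the left-hand side vanishes if and only if $\mathcal{L}_{\Delta}\nu_S=0$, which is exactly the asserted equivalence. There is no serious obstacle in this argument: the only point requiring care is to record that $1/g$ is well defined on $S$, guaranteed by the assumption $\xi(H)(x)\neq 0$ for all $x\in S$, after which the whole proof reduces to the elementary cancellation of the conformal factors inside the interior product.
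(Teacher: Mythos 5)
Your proof is correct, and it takes a genuinely cleaner route than the paper's. Both arguments reduce the lemma to the single identity $\mathcal{L}_{\Delta}\nu_S=-\mathcal{L}_{X_{H|S}}\bigl(\tfrac{1}{g}\nu_S\bigr)$ with $g=\xi(H)\circ i_S$, but the mechanisms differ. The paper expands the two Lie derivatives separately via Leibniz-type rules, writing $\mathcal{L}_{\Delta}\nu_S=-d\bigl(\tfrac{1}{g}\bigr)\wedge\iota_{X_{H|S}}\nu_S-\tfrac{1}{g}\mathcal{L}_{X_{H|S}}\nu_S$ and $\mathcal{L}_{X_{H|S}}\bigl(\tfrac{1}{g}\nu_S\bigr)=X_{H|S}\bigl(\tfrac{1}{g}\bigr)\nu_S+\tfrac{1}{g}\mathcal{L}_{X_{H|S}}\nu_S$, and then needs an auxiliary degree-counting step to reconcile the leftover terms: since $d\bigl(\tfrac{1}{g}\bigr)\wedge\nu_S$ is a $(2n+1)$-form on the $2n$-dimensional manifold $S$ it vanishes, and contracting it with $X_{H|S}$ yields $X_{H|S}\bigl(\tfrac{1}{g}\bigr)\nu_S=d\bigl(\tfrac{1}{g}\bigr)\wedge\iota_{X_{H|S}}\nu_S$, which makes the two expansions match up to sign. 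You bypass all of this by absorbing the conformal factors into the interior product, where they cancel by $C^\infty$-linearity ($\iota_{-g\Delta}(\tfrac{1}{g}\nu_S)=-\iota_{\Delta}\nu_S$), and then invoking Cartan's formula together with the closedness of top-degree forms, so the whole identity drops out in one line. What your approach buys is brevity and transparency: the dimension of $S$ enters only through the trivial fact that top forms are closed, rather than through the paper's contraction trick; what the paper's approach buys is essentially nothing extra here beyond being an explicit Leibniz computation — your argument is strictly more economical while using the same hypotheses (the reparametrization $X_{H|S}=-g\Delta$ from Theorem \ref{casoS} and the nonvanishing of $\xi(H)$ on $S$).
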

 \begin{proof} Indeed, using the properties of the Lie derivative, we deduce that
\begin{equation}\label{divergence-1}
{\mathcal L}_{\Delta}\nu_S=-{\mathcal L}_{\frac{X_{H|S}}{\xi(H)\circ i_S}}\nu_S
=-d\left(\frac{1}{\xi(H) \circ i_S}\right)\wedge \iota_{X_{H|S}} \nu_S -\frac{1}{\xi(H)\circ i_S} {\mathcal L}_{X_{H|S}}\nu_S,
\end{equation}
and
\begin{equation}\label{divergence-2}
{\mathcal L}_{X_{H|S}}\left(\frac{1}{\xi(H) \circ i_S}\nu_S\right) = X_{H|S}\left(\frac{1}{\xi(H) \circ i_S)}\right) \nu_S + \frac{1}{\xi(H) \circ i_S}{\mathcal L}_{X_{H|S}}\nu_S.
\end{equation}
Now, since $d\left(\displaystyle \frac{1}{\xi(H) \circ i_S}\right) \wedge \nu_S$ is a $(2n+1)$-form on $S$, we have that $0 = d\left(\displaystyle \frac{1}{\xi(H) \circ i_S}\right) \wedge \nu_S$ and thus
\[
0 = \iota_{X_{H|S}}\left(d\left(\frac{1}{\xi(H) \circ i_S}\right) \wedge \nu_S\right) 
= X_{H|S} \left(\frac{1}{\xi(H) \circ i_S}\right) \nu_S - d\left(\frac{1}{\xi(H) \circ i_S}\right) \wedge \iota_{X_{H|S}} \nu_S,
\]
that is, 
\[
X_{H|S} \left(\frac{1}{\xi(H) \circ i_S}\right) \nu_S = d\left(\frac{1}{\xi(H) \circ i_S}\right) \wedge \iota_{X_{H|S}} \nu_S\,.
\]
So, using (\ref{divergence-1}) and (\ref{divergence-2}), we conclude that
\[
{\mathcal L}_{\Delta}\nu_S = - {\mathcal L}_{X_{H|S}}\left(\frac{1}{\xi(H) \circ i_S}\nu_S\right)
\]
which proves the result.
 \end{proof}
Therefore, our problem reduces to discussing the existence of an invariant measure for the 
Liouville vector field $\Delta$ of the exact symplectic manifold $(S,di_S^*\eta)$. Using this fact, we may deduce the following result.
\begin{theorem} \label{inv-measure-S}
Let $(C, \eta, H)$ be a contact Hamiltonian system on a contact manifold $(C, \eta)$ of dimension $2n+1$ with Reeb vector field $\xi$. 
If $\xi(H)(x) \not= 0$ for every $x \in S = H^{-1}(0)$, then the restriction to $S$ of the contact Hamiltonian dynamics $X_H$ admits an invariant measure 
if and only if there exists a real $C^{\infty}$-function $\sigma$ on $S$ such that
\begin{equation}\label{eq:conditiononS}
X_{H|S}(\sigma) = n(\xi(H) \circ i_S),
\end{equation}
where $i_S: S \to C$ is the canonical inclusion. 
Moreover, if the previous condition holds then 
\[
\displaystyle \frac{\exp(\sigma)}{\xi(H) \circ i_S}(d i_S^*\eta)^n
\] 
is an invariant measure for $X_{H|S}$.  
\end{theorem}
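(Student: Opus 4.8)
The plan is to reduce everything to the study of a single, very rigid vector field: the Liouville vector field $\Delta$ on the exact symplectic manifold $(S,\Omega)$ with $\Omega = d i_S^*\eta = d\theta$, as provided by Theorem~\ref{casoS}. Indeed, by Lemma~\ref{XD} the field $X_{H|S}$ admits an invariant volume form if and only if $\Delta$ admits an invariant volume form $\nu_S$, and in that case $X_{H|S}$ preserves $\frac{1}{\xi(H)\circ i_S}\nu_S$. Thus the whole statement follows once one characterises the $\Delta$-invariant volume forms on $S$ and reads off the explicit density.

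The key computation is the scaling behaviour of $\Delta$ on the Liouville volume. Since $\Omega=d\theta$ is closed and $\iota_\Delta\Omega=\theta$ by~\eqref{Liouville}, Cartan's formula gives $\mathcal{L}_\Delta\Omega = d\iota_\Delta\Omega + \iota_\Delta d\Omega = d\theta = \Omega$. As $S$ has dimension $2n$, the $2n$-form $\Omega^n=(d i_S^*\eta)^n$ is a volume form, and from $\mathcal{L}_\Delta\Omega=\Omega$ one obtains $\mathcal{L}_\Delta\Omega^n = n\,\Omega^n$. Because $\Omega^n$ is nowhere vanishing, $S$ is orientable and every volume form compatible with its orientation is uniquely of the form $\nu_S=\exp(\sigma)\Omega^n$ with $\sigma\in C^\infty(S)$; a direct Leibniz computation then yields $\mathcal{L}_\Delta\nu_S = \big(\Delta(\sigma)+n\big)\nu_S$. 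Hence $\nu_S$ is $\Delta$-invariant precisely when $\Delta(\sigma)=-n$.

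It then remains to translate this condition and assemble the density. Using the reparametrization $X_{H|S}=-(\xi(H)\circ i_S)\Delta$ from Theorem~\ref{casoS}, one has $\Delta(\sigma) = -\frac{1}{\xi(H)\circ i_S}X_{H|S}(\sigma)$, so $\Delta(\sigma)=-n$ is equivalent to $X_{H|S}(\sigma)=n(\xi(H)\circ i_S)$, which is exactly~\eqref{eq:conditiononS}. Feeding the invariant $\nu_S=\exp(\sigma)\Omega^n$ through Lemma~\ref{XD} shows that $X_{H|S}$ preserves $\frac{1}{\xi(H)\circ i_S}\exp(\sigma)\Omega^n = \frac{\exp(\sigma)}{\xi(H)\circ i_S}(d i_S^*\eta)^n$, giving the stated measure; the converse direction runs the same argument backwards, starting from an $X_{H|S}$-invariant volume form and setting $\nu_S=(\xi(H)\circ i_S)\,\mu$.

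The only point demanding care — and what I would flag as the subtle part rather than a genuine obstacle — is the bijection between volume forms and functions $\sigma$: it presupposes that one fixes the orientation induced by $\Omega^n$ and, if $S$ is disconnected, argues component by component. Since an invariant measure is insensitive to a global sign and $\sigma=\ln(\nu_S/\Omega^n)$ is globally smooth precisely because $\Omega^n$ is nowhere zero, no generality is lost. The mathematical content is entirely concentrated in the identity $\mathcal{L}_\Delta\Omega^n=n\,\Omega^n$ together with Lemma~\ref{XD}; once these are in hand the remainder is bookkeeping.
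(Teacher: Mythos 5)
Your proof is correct and follows essentially the same route as the paper's: both reduce to the Liouville vector field via Lemma~\ref{XD}, compute $\mathcal{L}_\Delta\Omega=\Omega$ (hence $\mathcal{L}_\Delta\nu_S=(\Delta(\sigma)+n)\nu_S$), and translate $\Delta(\sigma)=-n$ into \eqref{eq:conditiononS} through $X_{H|S}=-(\xi(H)\circ i_S)\Delta$. Your explicit remark on orientation and connected components is a small addition the paper leaves implicit, but it does not change the argument.
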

\begin{proof}
From the first item in Theorem \ref{casoS}, we have that $(S, di_S^*\eta)$ is an exact symplectic manifold of dimension $2n$ and a (positive) volume form $\nu_S$ on $S$ is given by
\begin{equation}\label{nu-S}
\nu_S = \exp(\sigma)(di_S^*\eta)^n,
\end{equation}
with $\sigma: S \to \R \in C^{\infty}(S)$.

Now, if $\Delta$ is the Liouville vector field on $S$ then, using the fact that $\iota_\Delta d(i_S^*\eta) = i_S^*\eta$, we deduce that
\[
{\mathcal L}_\Delta d(i_S^*\eta) = d {\mathcal L}_\Delta(i_S^*\eta) = d(\iota_\Delta d(i_S^*\eta) + d((i_S^*\eta)(\Delta))) = di_S^*\eta.
\]

Thus, using (\ref{nu-S}) and the properties of the Lie derivative, we obtain that
\[
{\mathcal L}_\Delta \nu_S = (\Delta(\sigma) + n)\nu_S
\]
and, therefore,
\begin{equation}\label{inv-volume-Delta}
{\mathcal L}_\Delta\nu_S = 0\quad \Leftrightarrow \quad\Delta(\sigma) = -n.
\end{equation}
On the other hand, from Lemma \ref{XD}, it follows that
\begin{equation}\label{using-XD}
{\mathcal L}_{X_{H|S}}\left(\frac{1}{\xi(H) \circ i_S}\, \nu_S\right) = 0\quad \Leftrightarrow\quad {\mathcal L}_\Delta\nu_S = 0.
\end{equation}
In addition, since $X_{H|S} = -(\xi(H) \circ i_S) \Delta$, we deduce that
\begin{equation}\label{derivative-density}
X_{H|S}(\sigma) = n (\xi(H) \circ i_S) \quad\Leftrightarrow\quad \Delta(\sigma) = -n.
\end{equation}
Thus, the result follows using (\ref{nu-S}), (\ref{inv-volume-Delta}), (\ref{using-XD}) and (\ref{derivative-density}).
\end{proof} 
%
%
An immediate consequence of the previous theorem is the following obstruction to the existence of a (global) invariant measure
on $S$.
\begin{corollary}\label{CorObstruction}
Let $(C, \eta, H)$ be a contact Hamiltonian system on a contact manifold $(C, \eta)$ of dimension $2n+1$ with Reeb vector field $\xi$. 
If $\xi(H)(x) \not= 0$ for every $x \in S = H^{-1}(0)$ and $X_{H|S}$ has a critical point, then there cannot exist any
invariant measure for  $X_{H|S}$. 
\end{corollary}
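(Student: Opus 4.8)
The plan is to read this off directly from the characterization established in Theorem~\ref{inv-measure-S}. That theorem asserts that $X_{H|S}$ admits an invariant measure if and only if the linear first-order equation
\[
X_{H|S}(\sigma) = n(\xi(H) \circ i_S)
\]
has a global solution $\sigma \in C^\infty(S)$. Accordingly, I would argue by contradiction: assume an invariant measure for $X_{H|S}$ exists, so that such a $\sigma$ exists, and then evaluate the equation at the critical point to reach an inconsistency.

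Concretely, let $x_0 \in S$ be a critical point of $X_{H|S}$, i.e.\ $X_{H|S}(x_0) = 0$. The left-hand side of the equation, interpreted as the derivative of $\sigma$ along the field, satisfies $X_{H|S}(\sigma)(x_0) = d\sigma(x_0)\big(X_{H|S}(x_0)\big) = 0$, and this holds for \emph{every} candidate $\sigma$ precisely because the vector field vanishes at $x_0$. On the other hand, the right-hand side evaluated at $x_0$ equals $n\,\xi(H)(x_0)$, which is nonzero because $n \geq 1$ and, by hypothesis, $\xi(H)(x) \neq 0$ for all $x \in S$. Thus the required identity fails at $x_0$, so no global $\sigma$ can exist, and by Theorem~\ref{inv-measure-S} there is no invariant measure for $X_{H|S}$.

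There is essentially no technical obstacle here: the result is an immediate pointwise consequence of the cohomological-type obstruction encoded in Theorem~\ref{inv-measure-S}, the only point meriting care being the observation that $X_{H|S}(\sigma)$ necessarily vanishes at any zero of the field, independently of $\sigma$, which is exactly what forces the incompatibility with the nonvanishing source term $n(\xi(H)\circ i_S)$.
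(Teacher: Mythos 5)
Your proposal is correct and coincides with the paper's own proof: both invoke the characterization of Theorem~\ref{inv-measure-S}, observe that $X_{H|S}(\sigma)(x_0)=0$ at a critical point $x_0$ for any candidate $\sigma$, and derive a contradiction with $n\,\xi(H)(x_0)\neq 0$. No gaps; nothing further is needed.
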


\begin{proof} If $x_0\in S$ is a critical point of $X_{H|S}$ and $\sigma$ is an arbitrary real $C^{\infty}$-function on $S$ then $X_{H}(\sigma)(x_0) = 0$. 
This, using Theorem \ref{inv-measure-S} and the fact $\xi(H)(x_0) \not= 0$, proves the result.
\end{proof}
After Corollary \ref{CorObstruction}, a natural question arises: what happens if $X_H(x) \not= 0$ 
for every point $x\in S$ or, equivalently (see (\ref{X-H-S-Delta})), if $\Delta(x) \not= 0$ for every $x \in S$?

The previous question may be reformulated as follows. Suppose that $(S, d\theta)$ is an exact 
symplectic manifold, that $\Delta$ is the Liouville vector field of $S$ and that $\Delta(x) \not= 0$ $\forall x \in S$. 
Then, our question is:

Is there a real $C^\infty$-function $\sigma$ on $S$ such that $\Delta(\sigma) = -n$?

\noindent 
It is clear that the previous question is equivalent to the following one:  

Is there a real $C^\infty$-function $\sigma$ on $S$ such that $\Delta(\sigma) = 1$?

\noindent Next, we will consider two particular examples. 

\begin{example}[\bf The cotangent bundle without the zero section]\label{ExCotangent}
{\rm Suppose that $Q$ is a smooth manifold of dimension $n$ and that our symplectic manifold is the open subset $S$ 
of the cotangent bundle $T^*Q$ of $Q$ given by $S = T^*Q - 0_Q(Q)$, where $0_Q: Q \to T^*Q$ is the zero section. 
On $S$, we consider the restriction of the symplectic structure $\omega_Q = -d\lambda_Q$ on $T^*Q$, with $\lambda_Q$ the Liouville $1$-form. 
If $\Delta$ is the Liouville vector field on $S$ and $(q^{i}, p_i)$ are fibred coordinates on $S$, we have that
\begin{equation}\label{Liouville-symplectic}
\lambda_Q = p_i dq^{i}, \; \; \omega_Q = dq^i \wedge dp_{i}, \; \; \Delta = p_i\frac{\partial}{\partial p_i}.
\end{equation}
So, it is clear that $\Delta(x) \not= 0$, for every $x \in S$.

Now, we take a Riemannian metric $g$ on $Q$,
\[
g = g_{ij}(q)dq^{i} \otimes dq^{j}, \; \; \mbox{ with } g_{ij} = g\left(\frac{\partial}{\partial q^{i}}, \frac{\partial}{\partial q^{j}}\right),
\]
and the kinetic energy $K_g: T^*Q \to \R$ associated with $g$ given by
\[
K_g(\alpha) = \frac{1}{2} \| \alpha \|^2, \; \; \mbox{ for } \alpha \in T^*Q.
\]
If $(g^{ij}(q))$ is the inverse matrix of $(g_{ij}(q))$, we have that
\begin{equation}\label{kinetic-energy}
K_g(q, p) = \frac{1}{2} g^{ij}(q)p_ip_j. 
\end{equation}
From (\ref{Liouville-symplectic}) and (\ref{kinetic-energy}), it follows that
\[
\Delta((K_g)_{|S}) = 2(K_g)_{|S}.
\]
Thus, if $\sigma: S \to \R$ is the real function on $S$ defined by $\sigma = \displaystyle \frac{1}{2} \ln (K_g)_{|S} + (F\circ \pi_Q)_{|S}$, with $F:Q\to {\R}$ 
a function on $Q$ and $\pi_Q:T^*Q\to Q$  the canonical projection, we deduce that
\[
\Delta(\sigma) = 1.
\]
Note that, in fact, the volume form $\nu_S$ on $S$ given by
\[
\nu_S = \exp(-n\sigma) j_S^*(d\lambda_Q)^n = ((K_g)_{|S}^{-\frac{n}{2}}\exp(-n(F\circ \pi_Q)_{|S})j_S^*(d\lambda_Q)^n
\]
is an invariant measure for $\Delta$, with $j_S: S \to T^*Q$ the canonical inclusion. In particular, if $F(q)=-\frac{1}{n}\ln(\rho(q))$, with $\rho(q)$ some distribution on $Q$, we get
 $$\nu_S =((K_g)_{|S}^{-\frac{n}{2}}(\rho\circ \pi_Q)_{|S})j_S^*(d\lambda_Q)^n\,,$$
which provides an invariant measure of the type needed in Hamiltonian Monte Carlo algorithms~\cite{Mike2}.
}
\end{example}

\begin{example}[\bf Dissipative mechanical systems (revisited)]\label{ExDiss2}
{\rm Consider the framework of Example~\ref{ex:diss1}.
If $\xi$ is the Reeb vector field of the canonical contact structure on $C$, we have that   
$$\xi(H) = \displaystyle\frac{\partial H}{\partial z} = \gamma \neq 0\,.$$ 
For this system, the closed submanifold $S$ of $C$ is 
\[ 
S = H^{-1}(0)= \left\{ \left(-\frac{1}{\gamma}\left(\frac{1}{2} (p_1^2 + p_2^2) + V({\bf q})\right), {\bf q}, {\bf p}\right) \in C / ({\bf q}, {\bf p}) \in T^*\mathbb{R}^2\right\} \simeq T^*\mathbb{R}^2, 
\]
and the $1$-form $\theta$ and the Liouville vector field $\Delta$ are given by 
\[
\theta =i_S^*\eta=-\sum_{i=1}^2 \left[\left(\frac{1}{\gamma} \frac{\partial V}{\partial q^i} + p_i\right)dq^i + \frac{1}{\gamma}p_idp_i\right]
\]
 and 
\[
X_{H|S}=-\gamma \Delta =  \sum_{i=1}^2 \left[p_i \frac{\partial}{\partial q^i} - \left(\gamma p_i + \frac{\partial V}{\partial q^i}\right) \frac{\partial}{\partial p_i}\right].
\]
Thus, if ${\bf p}\neq 0$ then $X_{H|S}({\bf q}, {\bf p}) \neq 0.$ On the other hand, $X_{H|S}({\bf q}, {\bf 0}) =0$ if and only if  ${\bf q}$ is a critical point of $V$. So, 
if $V$ has a critical point  then there cannot be  an invariant measure for $X_{H|S}.$

Now, we consider the particular case when 
$V({\bf q}) = q^1+q^2$. Then, $V$ has no critical points and, using Theorem  \ref{inv-measure-S}, 
we have that 

\begin{equation}\label{XHSex}
X_{H|S} =  \sum_{i=1}^2 \left[p_i \frac{\partial}{\partial q^i} - \left(\gamma p_i + 1\right) \frac{\partial}{\partial p_i}\right].
\end{equation}

admits an invariant measure if and only if there exists a 
real $C^\infty$-function $\sigma$ on $S$ such that $X_{H|S}(\sigma)=2\gamma$ or, equivalently 
 $$ p_1 \frac{\partial \sigma}{\partial q^1}  +p_2 \frac{\partial \sigma}{\partial q^2}- (\gamma p_1+1)\frac{\partial \sigma}{\partial p_1}-(\gamma p_2 +1)\frac{\partial \sigma}{\partial p_2} =2 \gamma.$$
A solution of this equation is 
\begin{equation}\label{sigma}
\sigma{\bf (q}, {\bf p}) =- \gamma (p_1 + p_2) - \gamma^2 (q^1 + q^2)
\end{equation}
Thus (see Theorem  \ref{inv-measure-S}), 
\[
\frac{1}{\gamma}\,\exp(-\gamma (p_1 + p_2) - \gamma^2 (q^1 + q^2)) dq^1\wedge dp_1 \wedge dq^2\wedge dp_2\]
is an invariant measure for  $X_{H|S}.$

}

\end{example}

{Next we} will show that,  in a situation like the previous example (that is, $\xi(H)=\gamma\not=0$),  
the existence of a function $\sigma: S\to \R$ such that $X_{H|S}(\sigma)=n\gamma$ is equivalent to  a certain 
trivialization $\R\times B$  of $S$ under which  $X_{H|S}$ is a reparametrization of $\displaystyle\frac{\partial }{\partial s},$ where $s$ is the global coordinate on $\R$. 

In the proof of such a result, we will use the following fact.
\begin{lemma}\label{s1}
Let $Z$ be  a complete vector field on a manifold $M$.  Then, the following sentences are equivalent: 
\begin{enumerate}
\item 
There is a function $\sigma\in C^\infty(M)$ such  that  $Z(\sigma)=r,$ with $r\in \R-\{0\}.$ 
\item There exist a submanifold $D$ of $M$ of codimension $1$  and a diffeomorphism 
$\varphi:M\to \R\times D$ such that, if 
$t$ is  the global coordinate on $\R,$ 
\begin{equation}\label{T}
T_y\varphi(Z_y)=\displaystyle\frac{\partial}{\partial t}_{|\varphi(y)},
\end{equation} for all $y\in D$.
\end{enumerate}
 In fact, if \mbox{\rm (i)} holds then $D=\sigma^{-1}(0)$ and the inverse map to $\varphi$ is just the restriction to $\R\times D$ of the flow $\Phi^Z$ of the vector field $Z.$  
\end{lemma}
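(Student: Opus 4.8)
The plan is to prove the two implications separately; the substance is entirely in (i)$\Rightarrow$(ii), which is a \emph{global} version of the straightening‑out (flow box) theorem in which $\sigma$ supplies a globally defined ``clock'' along the orbits of $Z$. The converse (ii)$\Rightarrow$(i) is essentially free: letting $\mathrm{pr}\colon\R\times D\to\R$ be the projection onto the first factor, I would set $\sigma:=\mathrm{pr}\circ\varphi$, so that, since $\varphi$ carries $Z$ to $\partial/\partial t$, one gets $Z(\sigma)=d\,\mathrm{pr}(T\varphi(Z))=d\,\mathrm{pr}(\partial/\partial t)=1$, giving (i) with $r=1$.

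For (i)$\Rightarrow$(ii), assume $Z(\sigma)=r$ with $r\neq0$. First I would record two consequences. Since $d\sigma(Z)=Z(\sigma)=r\neq0$, the differential $d\sigma_x$ is nonzero for every $x$, so $0$ is a regular value and $D:=\sigma^{-1}(0)$ is an embedded submanifold of codimension $1$; moreover $Z$ is everywhere transverse to $D$ because $T_xD=\ker d\sigma_x$ while $Z_x\notin\ker d\sigma_x$. The role of completeness enters next: along any integral curve one has $\tfrac{d}{ds}\sigma(\Phi^Z_s(x))=Z(\sigma)(\Phi^Z_s(x))=r$, hence $\sigma(\Phi^Z_s(x))=\sigma(x)+rs$; as the flow is defined for all $s\in\R$ and $r\neq0$, every orbit meets $D$ exactly at time $s=-\sigma(x)/r$, and in particular $D\neq\emptyset$.

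Next I would define $\Psi\colon \R\times D\to M$ by $\Psi(t,y)=\Phi^Z_t(y)$, which is smooth and globally defined thanks to completeness, and show it is a diffeomorphism by exhibiting an explicit smooth inverse. Injectivity follows from the displayed identity: if $\Phi^Z_{t_1}(y_1)=\Phi^Z_{t_2}(y_2)$ with $y_i\in D$, evaluating $\sigma$ gives $rt_1=rt_2$, whence $t_1=t_2$ (using $r\neq0$), and then $y_1=y_2$ since $\Phi^Z_{t_1}$ is a bijection. Surjectivity is exactly the orbit‑crossing statement, so that the map $x\mapsto\bigl(\sigma(x)/r,\ \Phi^Z_{-\sigma(x)/r}(x)\bigr)$ is a well‑defined smooth two‑sided inverse. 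Setting $\varphi=\Psi^{-1}$, the computation $T\Psi\bigl(\partial/\partial t|_{(t,y)}\bigr)=\tfrac{d}{ds}\big|_{0}\Phi^Z_{t+s}(y)=Z_{\Psi(t,y)}$ shows that $Z$ is $\varphi$‑related to $\partial/\partial t$, which in particular yields \eqref{T} on $D$; and by construction $\varphi^{-1}=\Psi$ is the restriction of $\Phi^Z$ to $\R\times D$ while $D=\sigma^{-1}(0)$, matching the final assertion of the lemma.

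The only real work is verifying that $\Psi$ is a global diffeomorphism, and the hard part is that it forces the two hypotheses to be used in tandem: completeness guarantees both that $\Psi$ is defined on all of $\R\times D$ and that every orbit reaches $D$ (surjectivity), while $r\neq0$ guarantees that each orbit crosses $D$ in a single, $\sigma$‑labelled point (injectivity, together with the transversality that makes $D$ a submanifold). Without completeness one could only straighten $Z$ on an open neighbourhood of $D$, recovering merely the classical local flow box; the content of Lemma~\ref{s1} is precisely that the global datum $\sigma$ plus completeness upgrades this to a global trivialization $M\cong\R\times D$.
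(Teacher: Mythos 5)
Your proof is correct and is essentially the paper's own argument: both rest on the integrated identity $\sigma(\Phi^Z_t(x))=\sigma(x)+rt$ to show that $x\mapsto\bigl(\sigma(x)/r,\,\Phi^Z_{-\sigma(x)/r}(x)\bigr)$ and the restriction of the flow to $\R\times D$ are mutually inverse, and both obtain \eqref{T} from the tangent map of the flow. The only cosmetic differences are that the paper defines $\varphi$ directly and verifies the flow restriction is its inverse (whereas you build $\Psi=\Phi^Z|_{\R\times D}$ first and invert it), and in (ii)$\Rightarrow$(i) the paper scales the projection by $r$ to recover the given constant rather than $r=1$.
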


\begin{proof}
Suppose that there is a function $\sigma\in C^\infty(M)$ such  that  $Z(\sigma)=r\not=0.$ Then $0$ is a regular value 
of $\sigma$ and $D=\sigma^{-1}(0)$ is a  submanifold of $M$ of codimension $1$. Now, we consider the smooth map 
$$\varphi: M\to \R\times D,\;\;\; \varphi(y)=\left(\frac{1}{r}\sigma(y), \Phi^Z(-\frac{1}{r}\sigma(y),y)\right),\;\;\; y\in M,$$
where $\Phi^Z:\R\times M\to M$ is the flow of $Z.$ 

Note that, since $Z(\sigma)=r$ then, by integration, we deduce that
\begin{equation}\label{SP}
\sigma(\Phi^Z_t(y))=r t + \sigma(y),\;\;\;  \mbox{ for all } y\in M.
\end{equation} 
As a consequence, $\sigma(\Phi^Z(-\displaystyle\frac{1}{r}\sigma(y),y))=-\sigma(y) + \sigma(y)=0,$ that is, $\Phi^Z(-\displaystyle\frac{1}{r}\sigma(y),y)\in D.$

On the other hand, if $(t,x)\in \R\times D$ then, using (\ref{SP}) and {the fact that} $\sigma(x)=0$, we have that

\begin{eqnarray*}
\varphi(\Phi^Z(t,x))&=&\left(\frac{1}{r}\sigma(\Phi^Z(t,x)), \Phi^Z\left(-\frac{1}{r}\sigma(\Phi^Z(t,x)),\Phi^Z(t,x)\right)\right)\\
&=&\left(t+\frac{1}{r}\sigma(x), \Phi^Z \left(-t - \frac{1}{r}\sigma(x), \Phi^Z(t,x)\right)\right)=(t, x).
\end{eqnarray*}
Moreover,  if $y\in M,$ 
$$\Phi^Z(\varphi(y))=\Phi^Z\left(\frac{1}{r}\sigma(y),\Phi^Z(-\frac{1}{r}\sigma(y),y)\right)=\Phi^Z(0,y)=y.$$
Therefore, $\varphi$ is bijective and $\varphi^{-1}$ is just the restriction of $\Phi^Z$ to $\R\times D.$ Finally, using this last fact and that
 $(T_{(t,x)}\Phi^Z)(\frac{\partial}{\partial t}_{|(t,x)})=Z_{|\Phi^Z(t,x)}$ for $(t,x)\in \R\times D,$ we conclude that 
 $$T_y\varphi(Z_y)=\displaystyle\frac{\partial}{\partial t}_{|\varphi(y)}.$$

Conversely, if there exists a diffeomorphism  $\varphi:M\to D\times \R$ with  $D$ a submanifold  of $M$ 
of codimension $1$ such that (\ref{T}) holds, then the function $\sigma:M\to \R$ is just $\sigma=p_r\circ \varphi,$ 
$p_r:\R\times D \to \R$ being the map $p_r(t,x)=r t.$ The relation $Z(\sigma)=r$ is deduced from (\ref{T}).

\end{proof}

Now, we apply the previous lemma to a particular class of contact Hamiltonian systems.

\begin{proposition}\label{p1}
Let $(C,\eta)$ be a contact manifold  with complete Reeb vector field $\xi$ and  $H:C\to \R$ a Hamiltonian function such that 
$\xi(H)(x)=\gamma\not=0$ for all $x\in C$.
Then,  there exists a  contact isomorphism  $\varphi_1: C \to \R\times S$ 
from the contact manifold  $(C,\eta)$ to the contactification  $( \R\times S, dz + i_S^*\eta)$ of  
the  exact symplectic manifold  $(S,di_S^*\eta)$, i.e.~$\varphi_1$ is a diffeomorphism and 
 $$\varphi_1^*(dz + i_S^*\eta)=\eta,$$
 where $i_{S}:S\to C$ is the inclusion map. Therefore, 
$$T_y\varphi_1(\xi_y)=\displaystyle\frac{\partial }{\partial z}_{|\varphi_1(y)},\quad \mbox{ for all $y\in C.$} $$  
 \end{proposition}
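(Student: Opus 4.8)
The plan is to obtain $\varphi_1$ directly from Lemma~\ref{s1} and then to check that it intertwines the two contact forms. First I would apply Lemma~\ref{s1} to the complete vector field $Z=\xi$ on $M=C$ with the choice $\sigma=H$: the hypothesis $\xi(H)=\gamma\in\mathbb{R}-\{0\}$ is precisely condition (i) with $r=\gamma$, so $0$ is a regular value of $H$, the submanifold $D=H^{-1}(0)=S$ has codimension one, and the lemma produces a diffeomorphism $\varphi_1:C\to\mathbb{R}\times S$ whose inverse is the restriction of the Reeb flow, $\varphi_1^{-1}(z,x)=\Phi^\xi(z,i_S(x))$, and which satisfies $T_y\varphi_1(\xi_y)=\frac{\partial}{\partial z}|_{\varphi_1(y)}$. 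Here the $\mathbb{R}$-coordinate $z$ equals $\frac{1}{\gamma}(H\circ\varphi_1^{-1})$; note that $S\neq\emptyset$ because $H(\Phi^\xi_t(p))=\gamma t+H(p)$ sweeps out all of $\mathbb{R}$ along each Reeb orbit, and that $(S,di_S^*\eta)$ is exact symplectic by Theorem~\ref{casoS}(i). This already settles the last displayed identity of the statement.

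It then remains to verify the contact identity $\varphi_1^*(dz+i_S^*\eta)=\eta$, equivalently $(\varphi_1^{-1})^*\eta=dz+i_S^*\eta$. The key fact is that $\xi$ is the Reeb field, so $\mathcal{L}_\xi\eta=\iota_\xi d\eta+d(\iota_\xi\eta)=0$, whence the flow preserves $\eta$: $(\Phi^\xi_z)^*\eta=\eta$ for every $z\in\mathbb{R}$. Writing $\Phi^\xi:\mathbb{R}\times C\to C$ for the full flow and decomposing a tangent vector at $(z,p)$ as $a\frac{\partial}{\partial z}+v$ with $v\in T_pC$, the relation $T\Phi^\xi(a\frac{\partial}{\partial z}+v)=a\,\xi(\Phi^\xi_z(p))+T\Phi^\xi_z(v)$ together with $\eta(\xi)=1$ and flow-invariance gives $(\Phi^\xi)^*\eta=dz+\pi_C^*\eta$, where $\pi_C:\mathbb{R}\times C\to C$ is the projection. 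Since $\varphi_1^{-1}=\Phi^\xi\circ(\mathrm{id}_{\mathbb{R}}\times i_S)$ and $\pi_C\circ(\mathrm{id}_{\mathbb{R}}\times i_S)=i_S\circ\pi_S$, pulling this identity back along $\mathrm{id}_{\mathbb{R}}\times i_S$ yields $(\varphi_1^{-1})^*\eta=dz+i_S^*\eta$, which is exactly the desired relation.

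Finally I would conclude that $\varphi_1$ is the announced contact isomorphism onto the contactification $(\mathbb{R}\times S,dz+i_S^*\eta)$ of $(S,di_S^*\eta)$ (cf.\ Example~\ref{Ex-2.1}), and that the Reeb relation $T_y\varphi_1(\xi_y)=\frac{\partial}{\partial z}$ is then automatic, being read off from Lemma~\ref{s1} or deduced from the fact that contact isomorphisms carry Reeb fields to Reeb fields while $\frac{\partial}{\partial z}$ is the Reeb field of the contactification. I expect the only genuine work to be the flow-invariance computation of the second paragraph; the main point to be careful about is the bookkeeping that identifies the $\mathbb{R}$-coordinate of Lemma~\ref{s1}, namely $\frac{1}{\gamma}H$, with the variable $z$ of the contactification, so that the flow time used to build $\varphi_1^{-1}$ matches $z$ with no spurious factor of $\gamma$.
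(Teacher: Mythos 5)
Your proposal is correct and takes essentially the same route as the paper: apply Lemma~\ref{s1} to $Z=\xi$ with $\sigma=H$ (so $r=\gamma$ and $D=H^{-1}(0)=S$), obtain $\varphi_1$ as the inverse of the restricted Reeb flow, and then verify $\varphi_1^*(dz+i_S^*\eta)=\eta$ using only $\mathcal{L}_\xi\eta=0$ and $\eta(\xi)=1$. The sole (cosmetic) difference is in that verification: you compute $(\Phi^\xi)^*\eta=dz+\pi_C^*\eta$ pointwise on all of $\mathbb{R}\times C$ and then restrict along $\mathrm{id}_{\mathbb{R}}\times i_S$, whereas the paper works directly on $\mathbb{R}\times S$, deducing from the pairing with $\partial/\partial z$ and the invariance $\mathcal{L}_{\partial/\partial z}\bigl[(\Phi^\xi_{|\mathbb{R}\times S})^*\eta\bigr]=0$ that the pullback equals $dz+\alpha$, and then identifying $\alpha=i_S^*\eta$ by restricting to the slice $\{0\}\times S$.
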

\begin{proof}
In order to obtain the diffeomorphism $\varphi_1: C\to \R \times S$,  we apply Lemma \ref{s1} to the manifold $C$ and the vector field $\xi$. 
Then, there is a  diffeomorphism $\varphi_1: C\to \R \times S$ such that 
$$T_y\varphi_1(\xi_y)=\displaystyle\frac{\partial }{\partial z}_{|\varphi_1(y)}.$$
Moreover, if $\Phi^\xi_{|\R\times S}:\R\times S\to C$  is the restriction of the flow  of $\xi$ to $\R\times S,$ then $\varphi_1$ is just $(\Phi^\xi_{|\R\times S}) ^{-1}$ and 
\begin{equation}\label{xi-t}
(\Phi_{|\R\times S}^\xi)^*(\eta)\left(\frac{\partial }{\partial z}\right)=\eta(\xi)=1.
\end{equation} 
In addition, using that ${\mathcal L}_\xi\eta=0,$ we deduce that 
\begin{equation}\label{L-t}
{\mathcal L}_{\frac{\partial}{\partial z}}\left[(\Phi^\xi_{|\R\times S})^*(\eta)\right]=0.
\end{equation}
Therefore, from (\ref{xi-t}) and (\ref{L-t}), we obtain
\begin{equation}\label{des}
(\Phi_{|\R\times S}^\xi)^*(\eta)=dz+ \alpha
\end{equation}
with $\alpha$ a $1$-form on $S$.

 Now, we consider the map $ i_0:S\to \R\times S,$ given by  $i_0(x)=(0,x).$ Then, it is clear that  the following  diagram
$$
 \xymatrix{S\ar[dr]_{i_S}\ar[rr]^{i_0} &&
\R\times S\ar[dl]^{\Phi_{|\R\times S}^\xi}\\
&C&}
$$
is commutative. Thus, 
from  (\ref{des}), we deduce  that 
$$i_S^*\eta=i_0^*((\Phi_{|\R\times S}^\xi)^*(\eta))=i_0^*(dz) + i_0^*(\alpha)=\alpha. $$
 As a consequence, 
$$(\Phi_{|\R\times S}^\xi)^*(\eta)=dz +i_S^*\eta.$$ Since $\varphi_1=(\Phi^\xi_{|\R\times S})^{-1}$, we conclude that 
$$\varphi_1^*(dz + i_S^*\eta)=\eta.$$
\end{proof} 

If, in addition, we suppose that  {the reparametrization $\displaystyle\frac{X_{H|S}}{\xi(H)\circ i_S}$ of $(X_H)_{|S}$} is complete,  we also have the following result 
 \begin{proposition}\label{p2}
Let $(C,\eta)$ be a contact manifold of dimension $2n+1$ and  $H:C\to \R$ a Hamiltonian function such that
\begin{enumerate}
\item The Reeb vector field $\xi$ is complete.
\item $\xi(H)(x)
\not=0$ for all $x\in S=H^{-1}(0)$. 
\item The reparametrization  $Z=\displaystyle\frac{X_{H|S}}{\xi(H)\circ i_S}$ of the restriction of the Hamiltonian vector field $X_H$ to the submanifold $S$ of $C$ is complete, where $i_{S}:S\to C$ is the inclusion map.
\end{enumerate}
Then, the following sentences are equivalent: 
\begin{enumerate}
\item[(a)] There is a function  $\sigma:S\to \R$ 
such that 
$X_{H|S}(\sigma) = n(\xi(H) \circ i_S).$  
\item[(b)]
There exist a submanifold $B$  of $S$ of codimension $1$ with canonical inclusion $i_B:B \to C$ 
such that $\eta_B=i_B^*\eta$ is a contact structure on $B$ and  a symplectomorphism   $$\varphi_2: S\to \R\times B$$ 
from the symplectic manifold $(S,di_S^*\eta)$ to the symplectification of $(B,\eta_B)$, i.e.
$$\varphi_2^*(\exp(-s)(d\eta_B-ds\wedge \eta_B))=di_S^*\eta,$$
where $s$ is the global coordinate on $\R.$ Moreover, 
\begin{equation}\label{Tvarphi2}
T_y\varphi_2\left(\left( \displaystyle\frac{X_{H|S}}{\xi(H)\circ i_S}\right)_y\right)=\displaystyle\frac{\partial }{\partial s}_{|\varphi_2(y)},
\end{equation}
for all  $y\in S.$
  \end{enumerate} 
 \end{proposition}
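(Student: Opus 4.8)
The plan is to deduce the whole statement from Lemma~\ref{s1} applied to the complete vector field $Z=\frac{X_{H|S}}{\xi(H)\circ i_S}$ on $S$, and then to promote the diffeomorphism that Lemma~\ref{s1} produces to the claimed symplectomorphism by a short computation with the flow of $Z$. The first thing I would record is that, by \eqref{X-H-S-Delta}, $X_{H|S}=-(\xi(H)\circ i_S)\Delta$, so that $Z=-\Delta$ is minus the Liouville vector field of $(S,di_S^*\eta)$; in particular hypothesis~(iii) says precisely that $\Delta$ is complete. Since $\xi(H)\circ i_S$ vanishes nowhere on $S$, condition~(a), namely $X_{H|S}(\sigma)=n(\xi(H)\circ i_S)$, is equivalent to $Z(\sigma)=n$, and here $n\neq 0$. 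This is exactly statement~(i) of Lemma~\ref{s1} for $M=S$, the vector field $Z$, and $r=n$.

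From this I would obtain both implications at the level of diffeomorphisms. If (a) holds, Lemma~\ref{s1} furnishes a codimension-one submanifold $B=\sigma^{-1}(0)$ of $S$ and a diffeomorphism $\varphi_2:S\to\R\times B$ whose inverse is the restriction $\Psi=\Phi^Z_{|\R\times B}$ of the flow of $Z$ and which satisfies $T_y\varphi_2(Z_y)=\frac{\partial}{\partial s}_{|\varphi_2(y)}$; this is precisely \eqref{Tvarphi2}. Conversely, the diffeomorphism in~(b) is statement~(ii) of Lemma~\ref{s1}, whose converse half yields $\sigma=n\,(s\circ\varphi_2)$ with $Z(\sigma)=n$, hence $X_{H|S}(\sigma)=n(\xi(H)\circ i_S)$, which is~(a). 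Thus the equivalence holds as soon as I verify that the diffeomorphism produced in the direction (a)$\Rightarrow$(b) carries the symplectic data as asserted.

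For that I would compute the pullback of $\theta=i_S^*\eta$ and $\Omega=d\theta$ by $\Psi=\varphi_2^{-1}$. Two identities do the work: first $\theta(\Delta)=\Omega(\Delta,\Delta)=0$, so $\theta(Z)=0$; second $\mathcal{L}_\Delta\theta=\iota_\Delta d\theta+d\iota_\Delta\theta=\iota_\Delta\Omega=\theta$, so $\mathcal{L}_Z\theta=-\theta$ and $(\Phi^Z_s)^*\theta=e^{-s}\theta$. Evaluating $\Psi^*\theta$ at $(s,x)$ on $\frac{\partial}{\partial s}$ gives $\theta(Z)=0$, while on a vector $v\in T_xB$ it gives $e^{-s}\theta_x(v)=e^{-s}(\eta_B)_x(v)$, where I use that the inclusion $B\hookrightarrow C$ factors through $S$, so that $i_B^*\eta=j_B^*\theta=\eta_B$ for the inclusion $j_B:B\to S$. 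Hence $\Psi^*\theta=e^{-s}\eta_B$ (with $\eta_B$ understood as pulled back to $\R\times B$), and differentiating, $\Psi^*\Omega=d(e^{-s}\eta_B)=e^{-s}(d\eta_B-ds\wedge\eta_B)$; equivalently $\varphi_2^*\big(\exp(-s)(d\eta_B-ds\wedge\eta_B)\big)=di_S^*\eta$.

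It only remains to see that $\eta_B$ is a contact form. Since $\Omega$ is symplectic and $\Psi$ is a diffeomorphism, $\Psi^*\Omega=\exp(-s)(d\eta_B-ds\wedge\eta_B)$ is a symplectic form on $\R\times B$; by the criterion in Example~\ref{Ex-2.1} with $c=-1$ this is possible only if $\eta_B$ is a contact structure on $B$. Consequently $(\R\times B,\exp(-s)(d\eta_B-ds\wedge\eta_B))$ is the symplectification of $(B,\eta_B)$ and $\varphi_2$ is the required symplectomorphism, which closes (a)$\Rightarrow$(b). I expect the only genuine computation to be the flow pullback $\Psi^*\theta=e^{-s}\eta_B$, which rests on the homogeneity $\mathcal{L}_\Delta\theta=\theta$ of the Liouville field; hypotheses~(ii) and~(iii) enter only to make $S$ a submanifold on which $Z$ is a well-defined complete vector field, so that Lemma~\ref{s1} applies and $\Phi^Z$ is globally defined on all of $\R\times B$, making $\varphi_2$ an honest global diffeomorphism.
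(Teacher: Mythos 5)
Your proposal is correct and follows essentially the same route as the paper: both directions reduce to Lemma~\ref{s1} applied to $Z=-\Delta$, and the symplectomorphism property is derived from the same two identities $\iota_Z(i_S^*\eta)=0$ and ${\mathcal L}_Z(i_S^*\eta)=-i_S^*\eta$, with Example~\ref{Ex-2.1} (case $c=-1$) giving that $\eta_B$ is a contact form. The only difference is one of execution: where the paper writes $\exp(s)(\Phi^{Z}_{|\R\times B})^*(i_S^*\eta)=f\,ds+\alpha$ and then shows $\alpha=i_B^*\eta$ and $df=0$ by contractions and a diagram chase, you integrate the Lie derivative to the flow identity $(\Phi^Z_s)^*(i_S^*\eta)=e^{-s}\,i_S^*\eta$ and evaluate pointwise to get $\Psi^*(i_S^*\eta)=e^{-s}\eta_B$ directly --- a slightly more streamlined derivation of the same pullback formula.
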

\begin{proof} If (b) holds, then  we can consider the function $\sigma:S\to \R$ given by 
$$\sigma=n(pr_1\circ \varphi_2),$$ where $pr_1:\R\times B\to \R$ is the canonical projection on the first factor. Thus, using (\ref{Tvarphi2}), it follows that 
$$Z(\sigma)=\frac{X_{H|S}}{\xi(H) \circ i_S}(\sigma)=n.$$
Next, assume that (a) holds. Then we may apply Lemma \ref{s1}, with  $M=S$, $Z=\displaystyle\frac{X_{H|S}}{\xi(H) \circ i_S}\in {\mathfrak X}(S)$ and $r=n\in \R-\{0\}$ and we deduce that
there exist  a submanifold $B=\sigma^{-1}(0)$ of codimension $1$ of $S$ and a diffeomorphim $\varphi_2: S\to \R\times B$ such that 
\begin{equation}\label{T1}
T_y\varphi_2\left(Z_y\right)=\displaystyle\frac{\partial }{\partial s}_{|\varphi_2(y)}, \quad\mbox{ for $y\in S.$}
\end{equation}
Moreover, from Theorem \ref{casoS}, we conclude that $di_S^*\eta$ is an exact symplectic structure on $S$ and 
\begin{equation}\label{X-D}
X_{H|S}=-(\xi(H)\circ i_S) \Delta,
\end{equation}
where $\Delta$ is the Liouville vector field of the exact symplectic manifold $(S,di_{S}^*\eta)$. This implies that $Z=\frac{X_{H|S}}{\xi(H)\circ i_S}=-\Delta$  and that 
\begin{equation}\label{LX}
{\mathcal L}_{Z}(i^*_S\eta)=-i_S^*\eta.
\end{equation}

Now, if $i_B:B\to C$ is the inclusion map, we will prove that 
\begin{equation}\label{sim}
{\varphi}_2^*(\exp({-s})(di_B^*\eta-ds\wedge i_B^*\eta))=di_S^*\eta.
\end{equation}

In fact, using (\ref{T1}) and (\ref{LX}) and the fact that 
$\Phi^{Z}_{|\R\times B}:\R\times B\to S$
is the inverse map of ${\varphi}_2,$ it follows that 
$${\mathcal L}_\frac{\partial }{\partial s}(\Phi_{|\R\times B}^{Z})^* (i_S^*\eta)=-(\Phi_{|\R\times B}^{Z})^*(i_S^*\eta),$$
or equivalently, 
$${\mathcal L}_{\frac{\partial}{\partial s}} (\exp({s})(\Phi_{|\R\times B}^{Z})^*(i^*_S\eta))=0.$$
This implies that 
\begin{equation}\label{descomp}
\exp({s})(\Phi_{|\R\times B}^{Z})^*(i_S^*\eta)=fds + \alpha,
\end{equation}
where $f\in C^\infty(B)$ and $\alpha\in \Omega^{1}(B).$ Then,  we consider the map $ i_0:B\to \R\times B$, given by  $i_0(x)=(0,x),$ and the following diagram 
$$
 \xymatrix{B\ar[d]_{i_B}\ar[r]^{i_0} &
\R\times B\ar[d]^{\Phi_{|\R\times B}^{Z}}\\
C&\ar[l]_{{i}_S}S}
$$
is commutative. Thus, using 
(\ref{descomp}), we have that 
\begin{equation}\label{alpha}
i_B^*\eta=(i_S\circ \Phi_{|\R\times B}^{Z}\circ i_0)^*\eta=i_0^* ((\Phi_{|\R\times B}^{Z})^*( i_S^*\eta))=i_0^*(\exp({-s})(fds +\alpha))=\alpha.
\end{equation}
On the other hand, again from (\ref{descomp}),
\begin{equation}\label{d}
d(\Phi_{|\R\times B}^{Z})^*(i_S^*\eta)=\exp({-s})(df\wedge ds+d\alpha)-\exp({-s})(ds\wedge \alpha).
\end{equation}
Therefore, 
\begin{equation}\label{i-s}
	\iota_{\frac{\partial}{\partial s}}d(\Phi_{|\R\times B}^{Z})^*(i_S^*\eta)=-\exp({-s})df- \exp({-s})\alpha.
\end{equation}
Moreover, from (\ref{T1}), (\ref{LX}) and since ${\varphi}_2$ is the inverse map to $\Phi_{|\R\times B}^{Z},$ we have that 
\begin{equation}\label{sdP}
\begin{array}{rcl}
\iota_{\frac{\partial}{\partial s}}d(\Phi_{|\R\times B}^{Z})^*(i_S^*\eta)&=&
\iota_{\frac{\partial}{\partial s}}(\Phi_{|\R\times B}^{Z})^*(di_S^*\eta)
=(\Phi_{|\R\times B}^{X_{H|S}})^*(\iota_{Z}di_S^*\eta)\\[8pt]
&=&-(\Phi_{|\R\times B}^{Z})^*(i_S^*\eta)-(\Phi_{|\R\times B}^{Z})^*(d\iota_{Z}i_S^*\eta)\\[8pt]
&=&-(\Phi_{|\R\times B}^{Z})^*(i_S^*\eta),
\end{array}
\end{equation}

where the last equation follows using that

$$\iota_{Z}(i_S^*\eta)=-(\iota_\Delta(i_S^*\eta))=0.$$

So, from (\ref{alpha}), (\ref{i-s}) and (\ref{sdP}), we deduce that 
$$i_B^*\eta=i_0^*(\Phi_{|\R\times B}^{Z})^*(i_S^*\eta)=i_0^*(\exp({-s})df +\exp({-s})\alpha)=
df+i_B^*\eta$$
that is, 
\begin{equation}\label{f}
df=0
\end{equation}
Substituting (\ref{alpha}) and (\ref{f}) in (\ref{d}), we have that 
$$(\Phi_{|\R\times B}^{Z})^*(di_S^*\eta)=\exp({-s})(di_B^*\eta- ds\wedge i_B^*\eta),$$
which implies (\ref{sim}). Thus, since $(\Phi_{|\R\times B}^{Z})^*(di_S^*\eta)$ is a symplectic structure on $\R\times B$, 
we have  that $\eta_B=i_B^*\eta$ is a contact structure on $B$ (see Example \ref{Ex-2.1}).

\end{proof}
\begin{remark}\label{Weinstein}
{\rm Under the same hypotheses (i), (ii) and (iii) as in Proposition \ref{p2}, 
let $\Delta$ be the Liouville vector field of the exact symplectic manifold $(S, d(i_S^*\eta))$. Using (iii) in such a proposition and (\ref{X-H-S-Delta}), we have that $\Delta$ is complete. Now, suppose that (a) in Proposition \ref{p2} holds. This, using again (\ref{X-H-S-Delta}), implies that $\Delta(\sigma) = -n$. So, proceeding as in Remark \ref{Eliashberg-Gromov-Sackel}, we deduce that $(\Delta, -\sigma)$ is a gradient-like pair.  In addition, $(S, d(i_S^*\eta), \Delta, -\sigma)$ is a Weinstein manifold without singular points (see \cite{CiEl,ElGr,Sa,We} for the theory of Weinstein manifolds). 
}
\end{remark}
Propositions \ref{p1} and \ref{p2}  motivate the following definition 
\begin{definition} \label{def:symplecticsandwich}
A {\bf symplectic sandwich with contact bread} consists of 
\begin{enumerate}
\item 
 A  contact manifold  $(C,\eta)$ of dimension $2n+1$ and two submanifolds 
 $S$ and $B$ of codimension $1$ and $2,$ respectively, such that the $2$-form  $d(i_S^*\eta)$ is a 
 symplectic structure on $S$ and the $1$-form $i_B^*\eta$ is a contact structure on $B$, where $i_S:S\to  C$ and $i_B:B\to C$ are the canonical inclusions. 
 \item A contact isomorphism 
 $$\varphi_1:C\to \R \times S$$
 from the contact manifold $(C,\eta)$ to the contactification $(\R\times S,dz+i_S^*\eta)$ of 
 the exact symplectic manifold $(S,di_S^*\eta).$ Here, $z$ is the standard coordinate on $\R$ in $\R\times S.$ 
 \item A symplectic isomorphism
 $$\varphi_2: S \to \R\times B$$
 from the symplectic manifold $(S,di_S^*\eta)$ to the symplectification $(\R\times B, \exp(-s)(di_B^*\eta - ds\wedge i_{B}^*\eta))$ 
 of the contact manifold $(B,i_B^*\eta)$ . Here, $s$ is the standard coordinate on $\R$ in $\R\times B.$ 
 \end{enumerate}
 \end{definition}
 
 \begin{remark}{\rm
  \begin{enumerate}
 \item The map 
 $$({\rm Id}_{\R}\times \varphi_2):C\to \R\times (\R\times B)$$
 is a contact isomorphism from the original contact manifold $(C,\eta)$ to the contactification of the 
 symplectification of the contact manifold $(B,i_B^*\eta)$. Note that the contact structure on $\R\times (\R\times B)$ is $dz+\exp(-s)i_B^*\eta.$ 
 
 \item We have the chain of embeddings (the inclusions $i_S$ and $\widetilde{i}_B$) and projections ($pr_2\circ \varphi_1$ and $pr_2\circ \varphi_2$)
$$
\xymatrix{(C,\eta)\ar[rr]^{pr_2\circ \varphi_1}  &&
(S,d(i_S^*\eta))\ar@<+1ex>[ll]^{i_S}\ar[rr]^{pr_2\circ \varphi_2}   && (B,i_B^*\eta) \ar@<+1ex>[ll]^{\tilde{i}_B}\ar@/_2pc/[llll]^{i_B} 
}
$$

\noindent such that the manifold in the middle of the chain is a symplectic manifold and the manifolds on the left and the one on the right hand of the chain  are contact. 
This is a good motivation for using the terminology {\it a symplectic sandwich with contact bread.} \\
\end{enumerate}

For the symplectic sandwich with contact bread we will use the following notation
$$
\xymatrix{(C,\eta)\ar[rr]^{\kern-20pt \varphi_1}   &&
(\R\times S,dz+i_S^*\eta)\ar[rr]^{\kern-20pt {\rm Id}_\R\times\varphi_2}   &&(\R\times \R\times B,dz+ \exp(-s)i_B^*\eta)}
$$
}
\end{remark}

Now, from Propositions~\ref{p1} and~\ref{p2} and Theorem~\ref{inv-measure-S}, we deduce the main result of this section. 

\begin{theorem}\label{th:main}
Let $(C, \eta, H)$ be a contact Hamiltonian system of dimension $2n+1$ with complete Reeb vector field $\xi$.  
Suppose that $\xi(H)(x)=\gamma\not=0$ for all $x\in C$, 
and that  the restriction $X_{H|S}$ to $S$ of 
the contact Hamiltonian dynamics $X_H$ is complete. If $i_S:S\to C$ is the canonical inclusion and $\sigma:S\to \R$ is a $C^\infty$-function 
on $S$, then $X_{H|S}$ admits an invariant measure 
$$
\nu=\frac{\exp(\sigma)}{\gamma}(di_{S}^*\eta)^n
$$
if and only if we have a symplectic sandwich with contact bread 
$$
\xymatrix{(C,\eta)\ar[rr]^{\kern-20pt \varphi_1}   &&
(\R\times S,dz+i_S^*\eta)\ar[rr]^{\kern-20pt Id_\R\times\varphi_2}   &&(\R\times \R\times B,dz+ \exp(-s)i_B^*\eta)}
$$ and 
$$
T_y\varphi_2\left((X_{H|S})_y\right)={\gamma}\displaystyle\frac{\partial }{\partial s}_{|\varphi_2(y)},
$$
for all  $y\in S.$

\end{theorem}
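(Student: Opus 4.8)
The plan is to obtain this statement as the synthesis of Theorem~\ref{inv-measure-S}, Proposition~\ref{p1} and Proposition~\ref{p2}, after checking that the standing hypotheses of the theorem guarantee everything needed to invoke those results. First I would verify that the three conditions of Proposition~\ref{p2} hold: the Reeb field $\xi$ is complete by assumption; the condition $\xi(H)(x)=\gamma\neq 0$ for all $x\in C$ gives in particular $\xi(H)(x)\neq 0$ for every $x\in S$; and the reparametrized field $Z=\frac{X_{H|S}}{\xi(H)\circ i_S}=\frac{1}{\gamma}X_{H|S}$ is a constant rescaling of the complete vector field $X_{H|S}$, hence complete, since its flow is $\Phi^Z_t=\Phi^{X_{H|S}}_{t/\gamma}$, defined for all $t$. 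By the first item of Theorem~\ref{casoS}, $(S,di_S^*\eta)$ is an exact symplectic manifold, so the ambient data $(C,\eta)$, the codimension-one submanifold $S$, and the $2$-form $di_S^*\eta$ are already of the type appearing in Definition~\ref{def:symplecticsandwich}.

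Next I would reduce the existence of the invariant measure to condition (a) of Proposition~\ref{p2}. Since $\xi(H)\circ i_S=\gamma$ is a nonzero constant, Theorem~\ref{inv-measure-S} asserts that the volume form $\nu=\frac{\exp(\sigma)}{\gamma}(di_S^*\eta)^n$ is an invariant measure for $X_{H|S}$ exactly when $X_{H|S}(\sigma)=n(\xi(H)\circ i_S)=n\gamma$. Thus the assertion ``$X_{H|S}$ admits an invariant measure of the stated form'' is equivalent to the existence of a function $\sigma\in C^\infty(S)$ with $X_{H|S}(\sigma)=n\gamma$, which is precisely statement (a) of Proposition~\ref{p2}.

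It then remains to translate statement (a) into the symplectic-sandwich datum, which is exactly what Proposition~\ref{p2} does. In the forward direction, applying Proposition~\ref{p2} produces a codimension-one submanifold $B\subset S$ such that $\eta_B=i_B^*\eta$ is a contact form on $B$, together with a symplectomorphism $\varphi_2\colon S\to\R\times B$ onto the symplectification of $(B,\eta_B)$ satisfying $T_y\varphi_2(Z_y)=\frac{\partial}{\partial s}\big|_{\varphi_2(y)}$. Using $Z=\frac{1}{\gamma}X_{H|S}$, this tangency rewrites as $T_y\varphi_2\bigl((X_{H|S})_y\bigr)=\gamma\,\frac{\partial}{\partial s}\big|_{\varphi_2(y)}$, which is the displayed condition of the theorem. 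Supplying the contact isomorphism $\varphi_1\colon C\to\R\times S$ furnished by Proposition~\ref{p1} then assembles all of the ingredients of Definition~\ref{def:symplecticsandwich}, i.e.~a symplectic sandwich with contact bread. The converse direction runs the same chain backwards: a sandwich with the stated tangency provides the map $\varphi_2$ and the submanifold $B$ realizing condition (b) of Proposition~\ref{p2} (the factor $\gamma$ converting the tangency for $X_{H|S}$ into the one for $Z$), whence Proposition~\ref{p2} yields a $\sigma$ with $X_{H|S}(\sigma)=n\gamma$, and Theorem~\ref{inv-measure-S} then produces the invariant measure $\nu$.

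Since essentially all of the analytic content has been front-loaded into Lemma~\ref{s1} and Propositions~\ref{p1}--\ref{p2}, I expect no serious obstacle in this final argument; the only point requiring care is the bookkeeping of the constant reparametrization factor $\gamma$ relating $Z$ and $X_{H|S}$ (together with the attendant completeness of $Z$), which is exactly what makes the tangency condition appear with the precise factor $\gamma$ in the statement.
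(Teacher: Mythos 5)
Your proposal is correct and takes essentially the same approach as the paper: the paper's own ``proof'' of Theorem~\ref{th:main} is precisely the one-line deduction from Propositions~\ref{p1} and~\ref{p2} together with Theorem~\ref{inv-measure-S}. Your write-up just makes explicit the bookkeeping the paper leaves implicit, namely the completeness of $Z=\frac{1}{\gamma}X_{H|S}$ as a constant rescaling of the complete field $X_{H|S}$, and the conversion of the rectification condition for $Z$ into the condition $T_y\varphi_2\bigl((X_{H|S})_y\bigr)=\gamma\,\frac{\partial}{\partial s}\big|_{\varphi_2(y)}$.
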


\begin{example}[\bf Dissipative mechanical systems (re-revisited)]
{\rm In the particular case of Example~\ref{ExDiss2}, in which $V({\bf q})=q^{1}+q^{2}$, the symplectic sandwich with contact bread  is given by 
$$
\xymatrix{(\R\times T^*\R^{2},dz -p_idq^i)\ar[rr]^{ \varphi_1} &&
\R\times (T^*\R^{2},dq^i\wedge dp_i)} 
$$
where $$\varphi_1(z,{\bf q},{\bf p})=\left(\frac{1}{2\gamma} (p_1^2 +p_2^2)+ \frac{1}{\gamma}{(q^{1}+q^{2})} + z, q^1,q^2,p_1,p_2\right).$$
Since $\sigma$ is given by~\eqref{sigma}, the submanifold $B$ is just 
$$B=\left\{\left(-q^2-\frac{1}{\gamma}(p_1+p_{2}),q^2,p_1,p_2\right)\in T^*\R^{2}\right\}\cong 
 \R^{3}$$ 
 and its contact structure is 
 $$\eta_B=(p_1-p_2)dq^2 + \frac{1}{\gamma^2}dp_1 + \left(\frac{p_1-p_2}{\gamma} + \frac{1}{\gamma^2}\right)dp_2\,.$$
From (\ref{XHSex}), we have that the flow $\Phi^{X_{H|S}}: \R\times T^*Q\to T^*Q$ of $X_{H|S}$ is just 
$$
\Phi^{X_{H|S}}(s,({\bf q},{\bf p}))=\left(\frac{1}{\gamma}\left({\bf p}+ \frac{1}{\gamma}\right)(1-\exp({-\gamma s}))-\frac{1}{\gamma}s+{\bf q},\left({\bf p}+ \frac{1}{\gamma}\right)\exp({-\gamma s})-\frac{1}{\gamma}\right)
$$

In this case, a straightforward computation proves that the symplectomorphism 
$$
\xymatrix{(T^*\R^{2},dq^i\wedge dp_i)\ar[rr]^{\hspace{-30pt}\varphi_2} &&
(\R \times B, \exp(-s)(d\eta_B - ds\wedge \eta_B))}
$$
is given by 

\begin{align}
\varphi_2({\bf q},{\bf p})&=\left(\gamma F({\bf p},{\bf q}),\Phi^{X_{H|S}}\big(-F({\bf p},{\bf q}),{\bf q},{\bf p}\big)\right)\notag
\\
&=\left(\gamma F({\bf p},{\bf q}),\frac{1}{2\gamma^{2}}\left[(-2\gamma p_{2}-2)\exp(\gamma F)+2+\gamma^{2}(q^{2}-q^{1})+\gamma(p_{2}-p_{1})\right],\right.\notag\\
&\phantom{=}\qquad\left.\frac{1}{\gamma}\left[\exp(\gamma F)(\gamma p_{1}+1)-1\right],\frac{1}{\gamma}\left[\exp(\gamma F)(\gamma p_{2}+1)-1\right]\right)\,, 
\end{align}
 where $F=\frac{\sigma({\bf q},{\bf p})}{2\gamma}=-\frac{1}{2}(p_1+p_2 +\gamma(q^1 + q^2)).$
 }
 
 \end{example}
 
 \section{Conclusions and future work}\label{conclusions-future-work}

In this work we proved various results concerning contact Hamiltonian systems and the existence of invariant measures for such dynamics.
In particular, we proved that in the open subset $U$ where the Hamiltonian function $H$ has no zeros,  
the Hamiltonian vector field $X_H$ is the Reeb vector field of a conformal change of the original contact form $\eta$ (Theorem~\ref{casoU}). 
This directly implies the existence of an invariant measure for $X_{H|U}$,  recovering the corresponding result in \cite{BrTa} (see Corollary~\ref{BrTa}). 
On the other hand, assuming that the Reeb vector field $\xi$ associated with $\eta$ is transverse to the zero level set $S$ of $H$, 
we showed that $S$ is endowed with an exact symplectic structure and 
that $X_{H|S}$ is, up to reparametrization, the Liouville vector field of the exact symplectic manifold $S$ (Theorem~\ref{casoS}). 
Then we used this result to characterize the general condition for the existence of an invariant measure for $X_{H|S}$ (see Theorem~\ref{inv-measure-S}).
Finally, under the assumptions that $\xi(H)=\gamma\in \mathbb R-\{0\}$ and that both $\xi$ and $X_{H|S}$ are complete, 
we proved in Theorem~\ref{th:main}, 
as the main result of this paper, that $X_{H|S}$ admits an invariant measure if and only if the original contact manifold $C$ is a 
symplectic sandwich with contact bread (see Definition~\ref{def:symplecticsandwich})
and $X_{H|S}$ admits a suitable global rectification in the sandwich. 

Our contributions are closely related to the theory of Weinstein and convex contact manifolds without singular points 
(see Remarks \ref{Eliashberg-Gromov-Sackel} and \ref{Weinstein}). Therefore, taking into account that a lot of relevant results have been obtained for the previous manifolds 
in the presence of singular points (see \cite{CiEl,Sa}), it would be interesting to use such results to describe the dynamical behavior 
of the vector field $X_{H|S}$ in the presence of a Weinstein structure on the exact symplectic manifold~$S$. 
We will try to describe the dynamics in the region where there are no singular points and in a neighborhood of each singular point.  On the other hand, 
as another future work, we expect to generalize some of the results to the case where $\xi(H)\neq 0$
is not necessarily a constant (or, more generally, to the case when $(\xi, H)$ is a gradient-like pair following the ideas in \cite{Sa}) and to understand the relationship 
with invariant measures stemming from the Lagrangian description of these systems~\cite{DeLeon2019,Gaset2,Gaset1,Val1,wang2016,wang2019}. 
Moreover, given the relevance of contact Hamiltonian systems in many areas of science, we believe that our work will have 
interesting consequences 
in various contexts, especially in statistical physics and thermodynamics~\cite{bravetti2019contactthermo,BrTaPRE,Grmela1997,Morrison1986,Simo20202}, statistics~\cite{Mike}, 
and inflationary cosmology~\cite{DSloan1,DSloan2}.

\appendix
\section{\;\;}

 In Lemma \ref{s1}, we have found the  necessary and sufficient conditions for the existence of a function $\sigma\in C^\infty(M)$ 
 such that $Z(\sigma)=r\not=0$, when the vector field $Z$ is complete. In this appendix we will show what happens when  we do not assume  this completeness condition. 

\begin{proposition}
Let $Z$ be  a vector field on a manifold $M$ and $U$ the maximal open subset of $\R\times M$ such that the flow of  $Z $ is defined on $U.$  
Then, the following sentences are equivalent:
\begin{enumerate}
\item 
There is a function $\sigma\in C^\infty(M)$ such that the $\mbox{graph}$ of the function $-\displaystyle\frac{\sigma}{r}$ is included into $U$ and $Z(\sigma)=r.$
\item There exist a submanifold $D$ of $M$ of codimension $1$, an open set $V$ of ${\R}\times D$
 and a diffeomorphism 
$$\varphi:M\to V\subset {\R}\times D$$ such that, 
\begin{enumerate}
\item $(-pr_1(\varphi(y)),y)\in U$ for all $y\in M, $ where $pr_1:\R\times D\to \R$ is the canonical projection on the first factor. 
\item $V$ is a submanifold of $U$ of codimension $1.$
\item $T_y\varphi(Z_y)=\frac{\partial}{\partial t}_{|\varphi(y)}$ for all $y\in M,$ with $t$ the standard global coordinate on $\R$. 
\end{enumerate}
\end{enumerate}
\end{proposition}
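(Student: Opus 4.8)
The plan is to follow the proof of Lemma~\ref{s1} almost verbatim, replacing the global flow by the flow on its maximal domain $U$ and tracking carefully where each composition of the flow is defined. Throughout I will use the cocycle property of the (possibly incomplete) flow $\Phi^Z$: for each $y$ the set $J_y=\{t:(t,y)\in U\}$ is an open interval containing $0$, and whenever $s\in J_y$ one has $J_{\Phi^Z(s,y)}=J_y-s$ together with $\Phi^Z(t,\Phi^Z(s,y))=\Phi^Z(t+s,y)$ for $t+s\in J_y$. I fix $r\in\R-\{0\}$.

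For (i)$\Rightarrow$(ii), I will start from $\sigma\in C^\infty(M)$ with $Z(\sigma)=r$ and with the graph of $-\sigma/r$ contained in $U$. Since $Z(\sigma)=r\neq 0$ forces $d\sigma$ to be nowhere vanishing, $0$ is a regular value and $D=\sigma^{-1}(0)$ is a codimension-one submanifold. The graph hypothesis says exactly that $(-\sigma(y)/r,\,y)\in U$ for every $y$, so $\Phi^Z(-\sigma(y)/r,y)$ is defined and I may set
$$
\varphi:M\to\R\times D,\qquad \varphi(y)=\Big(\tfrac{1}{r}\sigma(y),\,\Phi^Z(-\tfrac{1}{r}\sigma(y),y)\Big).
$$
Integrating $Z(\sigma)=r$ along the flow gives $\sigma(\Phi^Z(t,y))=rt+\sigma(y)$, which shows that the second component really lands in $D$. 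Taking $V=\varphi(M)$, I will verify that $\varphi^{-1}$ is the restriction of $\Phi^Z$ to $V$: the cocycle identity lets $\Phi^Z(\tfrac{1}{r}\sigma(y),\cdot)$ undo $\Phi^Z(-\tfrac{1}{r}\sigma(y),\cdot)$ precisely because $-\sigma(y)/r\in J_y$, which is the graph hypothesis. This yields at once that $V\subset U$, that $\varphi$ is a smooth bijection with smooth inverse, hence a local diffeomorphism between equidimensional manifolds, so $V=\varphi(M)$ is open in $\R\times D$. Openness of $V$ in $\R\times D$, together with $V\subset U$ and the fact that $\R\times D$ is a codimension-one submanifold of $\R\times M\supset U$, gives (b). Condition (a) is the graph hypothesis rewritten through $pr_1(\varphi(y))=\sigma(y)/r$, and (c) comes from applying $T_{(t,x)}\Phi^Z(\partial/\partial t)=Z_{\Phi^Z(t,x)}$ to $\varphi^{-1}$, exactly as in Lemma~\ref{s1}.

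For (ii)$\Rightarrow$(i), I will simply set $\sigma=r\,(pr_1\circ\varphi)$. Pulling back $dt$ under $\varphi$ and invoking (c) gives $Z(\sigma)(y)=r\,(dt)_{\varphi(y)}\big(T_y\varphi(Z_y)\big)=r$; and since $-\sigma(y)/r=-pr_1(\varphi(y))$, condition (a) is precisely the assertion that the graph of $-\sigma/r$ lies in $U$. Thus both clauses of (i) hold.

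The routine computations are identical to those in Lemma~\ref{s1}; the genuinely new work is the domain bookkeeping. The main obstacle I anticipate is checking that the two flow compositions appearing in $\varphi^{-1}\circ\varphi$ and $\varphi\circ\varphi^{-1}$ are legitimate and that $V\subset U$—this is exactly where the openness of $U$, the graph condition, and the translation rule $J_{\Phi^Z(s,y)}=J_y-s$ must be combined. Once this is settled, the openness of $V$ in $\R\times D$ and its being a codimension-one submanifold of $U$ follow formally.
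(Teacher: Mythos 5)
Your proof is correct and takes essentially the same approach as the paper's: the same submanifold $D=\sigma^{-1}(0)$, the same map $\varphi(y)=\left(\frac{1}{r}\sigma(y),\Phi^Z(-\frac{1}{r}\sigma(y),y)\right)$ with inverse the restricted flow, and the same converse via $\sigma=r\,(pr_1\circ\varphi)$. The only cosmetic difference is that the paper defines $V=U\cap(\R\times D)$ at the outset, which makes openness in $\R\times D$ and condition (b) immediate, whereas you take $V=\varphi(M)$ and recover its openness through the local-diffeomorphism argument; the two sets coincide, so the bookkeeping is equivalent.
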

\begin{proof} 
As in the proof of Lemma \ref{s1}, if we suppose (i), then we deduce that $D=\sigma^{-1}(0)$ is the submanifold of $M$ 
which we are looking for. Now, we consider the submanifold  of $U$ of codimension $1$
$$V=\{(t,y)\in U/\sigma(y)=0\}=(\sigma\circ pr_2)^{-1}(0)$$
where $pr_2:U\subseteq \R\times M\to M$ is the restriction to $U$ of the canonical projection $\R\times M\to M$ on the second factor. 
Since $V=U\cap ({\R}\times D),$ then $V$ is an open subset of ${\R}\times D$. The diffeomorphism $\varphi$  is constructed as in Lemma \ref{s1} 
$$\varphi: M\to V\subseteq {\R}\times D,\;\;\;\varphi(y)=\left(\frac{1}{r}\sigma(y),  \Phi^Z\left(-\frac{1}{r}\sigma(y),y\right)\right),$$
where $\Phi^Z:U\subseteq \R \times M \to M$ is the flow of $Z$. Note that, using that the graph of 
$-\displaystyle\frac{\sigma}{r}$ is contained in $U$
we have that $\left(-\displaystyle\frac{\sigma(y)}{r},y\right)\in U$ for all $y\in U$. 
This implies that 
$$\left(\displaystyle\frac{1}{r}\sigma(y),  \Phi^Z\left(-\displaystyle\frac{1}{r}\sigma(y),y\right)\right)\in~U.$$ 
Moreover, as in  the proof of Lemma \ref{s1},
$$\sigma\left(\Phi^Z\left(-\frac{1}{r}\sigma(y),y\right)\right)=0,$$
 (that is, $\varphi(y)\in \R\times D$) and the restriction of $\Phi^Z$ to $V$ is the inverse map of $\varphi.$ 
 
Conversely, suppose that $D$ is  a submanifold of $M$ of codimension $1$, $V$ is an open 
subset of $\R\times D$ and $\varphi: M\to V\subseteq \R\times D$ is a diffeomorphim such that  (a), (b) and (c) hold. 
Then, we  consider the function $\sigma=p_r\circ \varphi$, where $p_r:\R\times D\to \R$ is the function $p_r(t,x)=rt.$ 
From (a) we have that $graph(-\frac{1}{r}\sigma)\subseteq U$ and, using (b) and (c), we conclude that $Z(\sigma)=r.$

\end{proof} 

Using this result  we deduce the corresponding versions of Propositions \ref{p1} and \ref{p2}, when the completeness condition does not hold.  

\begin{proposition} 
Let $(C,\eta, H)$ be a contact Hamiltonian system  and $U_1$ the maximal open subset of $\R\times C$ such that the flow of  
the Reeb vector field $\xi $ is defined on $U_1.$ We suppose that  $\xi(H)(x)=\gamma\not=0$ for all $x\in C$ 
and 
that the graph of the function $-\frac{H}{\gamma}$ is contained in $U_1.$  Then,  there exist an open set $V_1$ of $\R\times S$ and  a  
contact isomorphism  $\varphi_1: C \to V_1\subseteq \R\times S$ from the contact manifold  $(C,\eta)$ to the contact manifold  $( V_1, i_{V_1}^*(dz + i_S^*\eta)),$  i.e.
 $$\varphi_1^*(i_{V_1}^*(dz + i_S^*\eta))=\eta,$$
 where $i_{S}:S\to C$ and $i_{V_1}:V_1\to \R\times S$ are the corresponding inclusion maps. Therefore, 
$$T_y\varphi_1(\xi_y)=\displaystyle\frac{\partial }{\partial z}_{|\varphi_1(y)}, \mbox{ for all $y\in C.$} $$  
 \end{proposition}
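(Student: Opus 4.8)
The plan is to reduce to the non-complete rectification result proved just above in this appendix, exactly as Proposition~\ref{p1} reduces to Lemma~\ref{s1}. First I would apply that proposition with $M=C$, $Z=\xi$, $\sigma=H$ and $r=\gamma$: the hypothesis $\xi(H)=\gamma\neq 0$ gives $Z(\sigma)=r\neq 0$, and the assumption that the graph of $-H/\gamma=-\sigma/r$ lies in $U_1$ is precisely its graph condition. This yields the codimension-one submanifold $D=\sigma^{-1}(0)=H^{-1}(0)=S$, the open set $V_1=U_1\cap(\R\times S)$ of $\R\times S$, and a diffeomorphism $\varphi_1\colon C\to V_1$ whose inverse is the restriction $\Phi^\xi_{|V_1}$ of the Reeb flow and which satisfies $T_y\varphi_1(\xi_y)=\partial_z|_{\varphi_1(y)}$. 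The last displayed identity of the statement is then immediate.

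It remains to show $(\Phi^\xi_{|V_1})^*\eta=i_{V_1}^*(dz+i_S^*\eta)$. Writing $\beta=(\Phi^\xi_{|V_1})^*\eta$, the Reeb conditions $\eta(\xi)=1$ and $\mathcal{L}_\xi\eta=0$ translate, via $T\Phi^\xi(\partial_z)=\xi$, into $\iota_{\partial_z}\beta=1$ and $\mathcal{L}_{\partial_z}\beta=0$ on $V_1$, exactly as in the proof of Proposition~\ref{p1}. Setting $\tilde\alpha=pr_2^*(i_S^*\eta)$ with $pr_2\colon\R\times S\to S$ the projection, the form $dz+\tilde\alpha$ satisfies the same two conditions, so their difference $\delta=\beta-(dz+\tilde\alpha)$ is a $\partial_z$-horizontal, $z$-invariant $1$-form on $V_1$.

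To conclude $\delta=0$ I would use the section $i_0\colon S\to V_1$, $i_0(x)=(0,x)$, which lands in $V_1$ because $H|_S\equiv 0$ forces $(0,x)=(-H(x)/\gamma,x)\in U_1$. Since $\Phi^\xi(0,x)=x$, the composition $\Phi^\xi_{|V_1}\circ i_0$ equals $i_S$, whence $i_0^*\beta=i_S^*\eta=i_0^*\tilde\alpha$ and so $i_0^*\delta=0$; being horizontal, $\delta$ then vanishes at every point $(0,x)$. Finally, because $U_1$ is the \emph{maximal} flow domain, for each $x\in S$ the set $\{z : (z,x)\in U_1\}$ is an open interval containing $0$, so the whole segment from $(0,x)$ to any $(z_0,x)\in V_1$ remains in $V_1$; the $z$-invariance $\mathcal{L}_{\partial_z}\delta=0$ then propagates the vanishing of $\delta$ along this segment, giving $\delta=0$ on all of $V_1$. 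Pulling $\beta=dz+\tilde\alpha$ back by $\varphi_1$, whose inverse is $\Phi^\xi_{|V_1}$, yields $\varphi_1^*(i_{V_1}^*(dz+i_S^*\eta))=\eta$, and both forms are contact by Theorem~\ref{casoS}(i) and Example~\ref{Ex-2.1}.

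The main obstacle is precisely this last globalization step. In Proposition~\ref{p1} the target is the full product $\R\times S$, so the translation flow of $\partial_z$ is complete and trivializes $\beta$ at once; here $V_1$ is only an open subset of $\R\times S$, and the $z$-translation argument is valid only where it stays inside $V_1$. What makes it go through is the interval structure of the maximal flow domain $U_1$: it guarantees that each fiber $V_1\cap(\R\times\{x\})$ is a connected interval through $z=0$, and this connectivity is exactly what lets the $z$-invariance of $\delta$ carry its vanishing from the zero section across the whole fiber.
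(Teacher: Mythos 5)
Your proposal is correct and follows essentially the route the paper intends: the paper derives this proposition by applying the appendix rectification result (the non-complete version of Lemma~\ref{s1}) with $M=C$, $Z=\xi$, $\sigma=H$, $r=\gamma$, and then repeating the argument of Proposition~\ref{p1} on $V_1=U_1\cap(\R\times S)$. Your only addition — using that each fiber $V_1\cap(\R\times\{x\})$ is a connected interval through $z=0$ (by maximality of the flow domain) to propagate the vanishing of $\delta$ off the zero section — is exactly the detail needed to justify the decomposition $(\Phi^{\xi}_{|V_1})^*\eta=dz+i_S^*\eta$ on the open set $V_1$ rather than on all of $\R\times S$, a point the paper leaves implicit.
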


 \begin{proposition}
Let $(C,\eta,H)$ be a contact Hamiltonian system of dimension $2n+1$ and  $U_1$ the maximal open subset of $\R\times C$ 
such that the flow of  the Reeb vector field $\xi $ is defined on $U_1.$ Suppose that $\xi(H)(x)\not=0$ for all $x\in S=H^{-1}(0).$ 
Then, the following sentences are equivalent: 
\begin{enumerate}
\item[(a)] There is a function  $\sigma:S\to \R$ on $S$ such that $X_{H|S}(\sigma) = n(\xi(H) \circ i_S)$  
and $$graph\left(-\frac{\sigma}{n}\right)\subseteq U_2,$$  with $U_2$ 
the maximal open subset of $\R\times S$ such that the flow of $\frac{X_{H|S}}{\xi(H)\circ i_S}$ is defined on $U_2.$ 
\item[(b)]
There exist a submanifold $B$  of $S$ of codimension $1$, with canonical inclusion $i_B:B\to  C$, such that $\eta_B=i_B^*\eta$ is a contact structure on $B$,   
an open subset $V_2$ of $\R\times B$  and  a symplectic isomorphism   $$\varphi_2: S\to V_2\subseteq \R\times B$$ 
from the symplectic manifold $(S,di_S^*\eta)$ to  $\left(V_2,i_{V_2}^*(\exp(-s)(d\eta_B-ds\wedge \eta_B))\right)$, i.e.
$$\varphi_2^*(i_{V_2}^*(\exp(-s)(d\eta_B-ds\wedge \eta_B)))=d(i_S^*\eta),$$
where $s$ is the global coordinate of $\R$ and $i_{V_2}:V_2\to \R\times B$ is the inclusion map. Moreover, $(pr_1(\varphi_2(y)),y)\in U_2$ and 
 $$T_y\varphi_2\left (\left(\frac{X_{H|S}}{\xi(H)\circ i_S}\right)_y\right)=\displaystyle\frac{\partial }{\partial s}_{|\varphi_2(y)},$$
for all  $y\in S.$ 
 \end{enumerate} 
 \end{proposition}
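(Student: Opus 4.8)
The plan is to reproduce the proof of Proposition~\ref{p2} essentially line by line, but with Lemma~\ref{s1} replaced by its incomplete-flow version (the first Proposition of this appendix) and with the completeness of $Z:=\frac{X_{H|S}}{\xi(H)\circ i_S}$ traded for the graph condition $\mathrm{graph}(-\sigma/n)\subseteq U_2$. The structural input comes from Theorem~\ref{casoS}: on $S$ the form $d i_S^*\eta$ is exact symplectic and, by \eqref{X-H-S-Delta}, $X_{H|S}=-(\xi(H)\circ i_S)\Delta$, so that $Z=-\Delta$, where $\Delta$ is the Liouville vector field of $(S,d i_S^*\eta)$. From $\iota_\Delta d(i_S^*\eta)=i_S^*\eta$ one reads off the two identities $\iota_Z(i_S^*\eta)=0$ and $\mathcal L_Z(i_S^*\eta)=-i_S^*\eta$, which are the only facts about $Z$ needed below.

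The implication (b)$\Rightarrow$(a) is the short one. Given $\varphi_2$, I would set $\sigma=n\,(pr_1\circ\varphi_2)$; the tangency condition in (b) then yields $Z(\sigma)=n$, i.e.\ $X_{H|S}(\sigma)=n(\xi(H)\circ i_S)$, while the flow-domain condition on $U_2$ in (b) is exactly $\mathrm{graph}(-\sigma/n)\subseteq U_2$. This is nothing but the easy direction of the first Proposition of this appendix, applied to $M=S$, the vector field $Z$ and $r=n$.

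For (a)$\Rightarrow$(b) I would apply the same rectification proposition in the other direction. It produces the codimension-one submanifold $B=\sigma^{-1}(0)$, the open set $V_2=U_2\cap(\R\times B)$ and a diffeomorphism $\varphi_2:S\to V_2$ rectifying $Z$, whose inverse is the restriction of the flow $\Phi^Z$. Writing $\mu=(\Phi^Z|_{V_2})^*(i_S^*\eta)$, the identity $\mathcal L_Z(i_S^*\eta)=-i_S^*\eta$ gives $\mathcal L_{\partial/\partial s}(\exp(s)\mu)=0$, so that $\exp(s)\mu=f\,ds+\alpha$ with $f\in C^\infty(B)$ and $\alpha\in\Omega^1(B)$; pulling back along $i_0(x)=(0,x)$ and using the commutative triangle $i_S\circ\Phi^Z|_{V_2}\circ i_0=i_B$ identifies $\alpha=i_B^*\eta=:\eta_B$. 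Computing $\iota_{\partial/\partial s}d\mu$ by Cartan's formula exactly as in \eqref{i-s}--\eqref{sdP} of Proposition~\ref{p2}, and using $\iota_Z(i_S^*\eta)=0$, forces $df=0$; substituting back gives $d\mu=\exp(-s)(d\eta_B-ds\wedge\eta_B)$, which is precisely the claimed identity $\varphi_2^*\bigl(i_{V_2}^*(\exp(-s)(d\eta_B-ds\wedge\eta_B))\bigr)=d i_S^*\eta$. Since $d\mu$ is symplectic on $V_2$ and $V_2$ meets every fibre $\R\times\{x\}$, nondegeneracy along $\{0\}\times B$ shows via Example~\ref{Ex-2.1} that $\eta_B$ is a contact form on $B$; the tangency identity and the $U_2$-condition of (b) are then read off directly from the rectification proposition.

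The main obstacle --- and the only genuine departure from the complete case of Proposition~\ref{p2} --- is the step ``$\mathcal L_{\partial/\partial s}(\exp(s)\mu)=0\Rightarrow\exp(s)\mu=f\,ds+\alpha$ with $f,\alpha$ coming from $B$''. On the incomplete domain $V_2$ the $s$-independence of the local coefficients does not by itself guarantee that they glue to globally defined objects on $B$, nor even that the slice $\{0\}\times B$ over which one wishes to normalise lies in $V_2$. The hard part is therefore to legitimate integrating the $s$-invariance from $s=0$. This is where I would exploit the structure of a maximal flow domain: $U_2\cap(\R\times\{x\})$ is an open interval containing $0$ for each $x$, whence $V_2\cap(\R\times\{x\})$ is connected and contains $(0,x)$. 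This connectedness lets the $s$-invariant coefficients be evaluated unambiguously on the slice $\{0\}\times B\subseteq V_2$, so that $f$ and $\alpha$ are well defined on all of $B$; once this is secured, the remaining manipulations are identical to those of Proposition~\ref{p2}.
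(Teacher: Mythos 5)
Your proposal is correct and follows exactly the route the paper intends: the paper gives no written proof of this appendix proposition, stating only that it is deduced from the incomplete-flow rectification result by mirroring the proof of Proposition~\ref{p2}, which is precisely what you carry out. Your extra care in justifying the decomposition $\exp(s)\mu=f\,ds+\alpha$ via connectedness of the fibres $U_2\cap(\R\times\{x\})$ (open intervals containing $0$, so the $s$-invariant coefficients are determined by their values on the slice $\{0\}\times B\subseteq V_2$) fills in the one step the paper leaves implicit, and your reading of the flow-domain condition in (b) as $\mathrm{graph}(-\sigma/n)\subseteq U_2$ silently corrects an apparent sign typo in the paper's statement, consistently with part (ii)(a) of the first appendix proposition.
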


 \end{document}